\theoremstyle{definition}
\newtheorem{Theorem}{Theorem}
\newtheorem{Proposition}{Proposition}
\newtheorem{Corollary}{Corollary}
\newtheorem{Example}{Example}
\newtheorem{Definition}{Definition}
\begin{document}
\title{Quantum Advantage in Non-Interactive Source Simulation} 

% %%% Single author, or several authors with same affiliation:
% \author{%
%  \IEEEauthorblockN{Andrew R.~Barron}
%  \IEEEauthorblockA{Department of Statistics and Data Science\\
%                    Yale University\\
%                    New Haven, CT, USA\\
%                    Email: andrew.barron@yale.edu}
% }

%%% Several authors with up to three affiliations:

%\author{%
%  \IEEEauthorblockN{Andrew R.~Barron}
% \IEEEauthorblockA{Department of Statistics and Data Science\\
%                    Yale University\\
%                    New Haven, CT, USA\\
%                    Email: andrew.barron@yale.edu}
%  \and
%  \IEEEauthorblockN{Claude E.~Shannon and David Slepian}
%  \IEEEauthorblockA{Bell Telephone Laboratories, Inc.\\ 
%                    Murray Hill, NJ, USA\\
%                    Email: \{csh, dsl\}@bell-labs.com}
%}

%% Many authors with many affiliations:
\author{%
  \IEEEauthorblockN{Hojat Allah Salehi\IEEEauthorrefmark{1},
                    Farhad Shirani\IEEEauthorrefmark{1},
                    S. Sandeep Pradhan\IEEEauthorrefmark{2},
                   }
  \\\IEEEauthorblockA{\IEEEauthorrefmark{1}%
 Florida International University, Miami, FL,
                     \{hsalehi,fshirani\}@fiu.edu}
\\  \IEEEauthorblockA{\IEEEauthorrefmark{2}%
                University of Michigan, Ann Arbor, MI, pradhanv@umich.edu}
}

\maketitle

%%%%%%
%% Abstract: 
%% If your paper is eligible for the student paper award, please add
%% the comment "THIS PAPER IS ELIGIBLE FOR THE STUDENT PAPER
%% AWARD." as a first line in the abstract. 
%% For the final version of the accepted paper, please do not forget
%% to remove this comment!
%%

\begin{abstract}
   This work considers the non-interactive source simulation problem (NISS). 
   In the standard NISS scenario, a pair of distributed agents, Alice and Bob, observe a distributed binary memoryless source $(X^d,Y^d)$ generated based on joint distribution $P_{X,Y}$. The agents wish to produce a pair of discrete random variables $(U_d,V_d)$ with joint distribution $P_{U_d,V_d}$, such that $P_{U_d,V_d}$ converges in total variation distance to a target distribution $Q_{U,V}$.
   Two variations of the standard NISS scenario are considered. In the first variation, in addition to $(X^d,Y^d)$ the agents have access to a shared Bell state. The agents each measure their respective state, using a measurement of their choice, and use its classical output along with  $(X^d,Y^d)$ to simulate the target distribution. This scenario is called the entanglement-assisted NISS (EA-NISS). In the second variation, the agents have access to a classical common random bit $Z$, in addition to $(X^d,Y^d)$. This scenario is called the classical common randomness NISS (CR-NISS). It is shown that for binary-output NISS scenarios, the set of feasible distributions for EA-NISS and CR-NISS are equal with each other. Hence, there is not quantum advantage in these EA-NISS scenarios. For non-binary output NISS scenarios, it is shown that the set of simulatable distributions in the CR-NISS scenario has measure zero within the set of distributions in EA-NISS scenario.
   
   % For non-binary output NISS scenarios, it is shown through an example that there are distributions that are feasible in EA-NISS but not in CR-NISS. This shows that there is a quantum advantage in non-binary output EA-NISS. 
\end{abstract}

\section{Introduction}
A fundamental problem of interest in information theory and theoretical computer science  is to quantify the correlation between the outputs in distributed processing of random sequences. 
The problem has been extensively studied in the classical settings under the frameworks of non-interactive source simulation (NISS) \cite{ghazi2016decidability,ghazi2017dimension,de2018non,khorasgani2021decidability,bhushan2023secure,yu2022common,sudan2019communication,Shiraniboolean2017,shirani2019sub,shirani2023non}. Applications of such quantification include design of consensus protocols \cite{cachin2000random,rabin1983randomized}, proof-of-stake based blockchain \cite{bentov2016snow,gilad2017algorand}, scaling smart contracts \cite{das2018yoda}, anonymous communication \cite{goel2003herbivore}, private browsing \cite{dingledine2004tor},  publicly auditable auctions and
lottery \cite{bonneau2015bitcoin}, and cryptographic parameter
generation \cite{ahlswede1993common,maurer1993secret,baigneres2015trap}, among others.

In classical NISS, two agents, Alice and Bob, observe a pair of random sequences $X^d$ and $Y^d$, respectively, for some $d\in \mathbb{N}$, where $X^d,Y^d$ are generated independently and based on an identical joint distribution $P_{X,Y}$. 
Their objective is to non-interactively simulate a target distribution $Q_{U,V}$ defined on a finite alphabet $\mathcal{U}\times \mathcal{V}$. To elaborate, given $d\in \mathbb{N}$, Alice and Bob wish to generate $U_d=f_d(X^d)$ and $V_d=g_d(Y^d)$, where $f_d$ and $g_d$ are potentially stochastic functions, respectively, such that $P_{U_d,V_d}$ converges to $Q_{U,V}$ as $d$ grows asymptotically with respect to an underlying distance measure, e.g., total variation distance or Kullback-Liebler divergence.

Prior works have investigated NISS scenarios in several directions, namely, they have considered \textit{decidability}, \textit{input complexity}, \textit{feasibility}, and \textit{implementability} problems. The
decidability problem focuses on the question of whether  it is possible for a Turing Machine to determine in finite time if $Q_{U,V}$ can be simulated using $(X^d,Y^d)\sim P_{X^d,Y^d}$ for some $d\in \mathbb{N}$. NISS scenarios with finite alphabet outputs  were shown to be decidable \cite{ghazi2016decidability,de2018non}. The input complexity problem in NISS focuses on quantifying the number of input samples necessary to achieve a desired total variation distance $\epsilon>0$ from a target distribution $Q_{U,V}$. 
It was shown in \cite{ghazi2016decidability} that given $P_{X,Y}$ and $Q_{U,V}$, the input complexity is $O(\exp \operatorname{poly}(\frac{1}{\epsilon}, \frac{1}{1-\rho_{X,Y}}))$, where $\rho_{X,Y}$ is the input maximal correlation.  The feasibility problem in NISS focuses on derivation of computable inner and outer bounds on the set of distributions that can be simulated in a given scenario. In this direction, a set of impossibility results for discrete-output NISS were given in \cite{kamath2016non}, where 
hypercontractivity techniques were used to provide necessary conditions for the simulatability of $Q_{U,V}$ for a given $P_{X,Y}$. These impossibility results were further improved upon in \cite{li2020boolean,shirani2023non}. The implementability problem in NISS investigates constructive mechanisms for finding the simulating functions $f_d(\cdot)$ and $g_d(\cdot)$. 
 Witsenhausen   \cite{witsenhausen1975sequences} studied  this problem in doubly-symmetric binary-output NISS (i.e., $Q_U(1)=Q_V(1)=\frac{1}{2}$), as well as scenarios where $(U,V)$ are jointly Gaussian (i.e., $Q_{U,V}$ is a Gaussian measure on $\mathbb{R}^2$), and derived an explicit algorithm to construct $f_d(\cdot)$ and $g_d(\cdot)$ with run-time $\operatorname{poly}(d)$.

The advantage of quantum protocols over their classical counterparts in various problems in computation, communications, and machine learning have been demonstrated in the literature \cite{Plenio2005,Chitambar2018,huang2021information,huang2021power,bennett2002entanglement,hsieh2010trading}. Since they can violate the Bell inequalities, entanglement, and quantum correlation enable the generation of statistical correlations that classical models cannot produce.
A question of considerable interest is to characterize conditions under which such quantum advantage manifests. For instance, it shown in \cite{bennett1999entanglement} that classical capacity of a point-to-point classical channel does not increase with shared entanglement. In contrast, for  multiple-access channels, this quantum resource  does increase the classical capacity region as was shown in \cite{leditzky2020playing,seshadri2023separation,pereg2023multiple}. This question has also been studied in the context of Shannon quantum resource theory \cite{devetak2008resource}, and quantum-quantum NISS \cite{nielsen1999conditions}.

\begin{figure*}
\begin{center}
\includegraphics[height=1.8in]{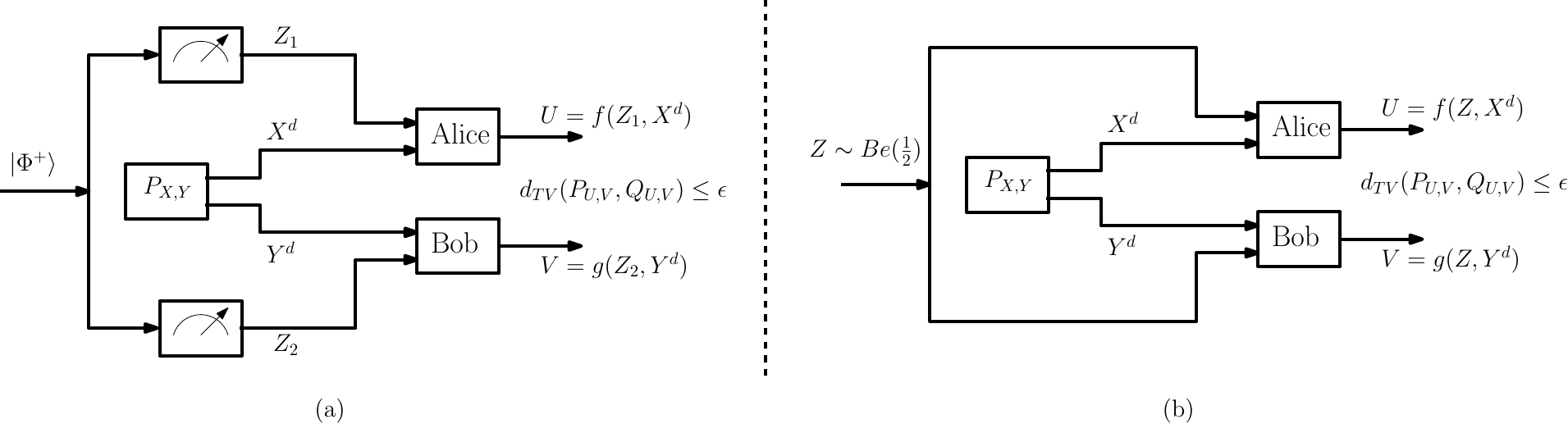}
\caption{The EA-NISS and CR-NISS scenarios: (a) In the EA-NISS scenario, Alice and Bob share a Bell state and a sequence of IID random variables $X^d$ and $Y^d$, respectively, generated according to $P_{X,Y}$; (b) In the CR-NISS scenario, Alice and Bob receive one bit of common randomness and a sequence of IID random variables $X^d$ and $Y^d$, respectively, generated according to $P_{X,Y}$.}
\label{fig:Overview}
\end{center} 
\vspace{-.3in}
\end{figure*} 

In this work, we consider an NISS scenario, where in addition to correlated classical randomness, the agents have access to a pair of fully entangled states, i.e., a shared Bell state (Figure \ref{fig:Overview}(a)). Alice and Bob each perform a measurement to produce classical outputs $Z_1$ and $Z_2$, respectively. Then, they generate $U_d=f_d(Z_1,X^d)$ and $V_d=g_d(Z_2,Y^d)$ to simulate a target distribution $Q_{U,V}$. We call this the entanglement-assisted NISS scenario (EA-NISS). We wish to characterize necessary and sufficient conditions under which there is a quantum advantage in this scenario. More precisely, we define an alternative classical NISS scenario, where, in addition to correlated sequences $(X^d,Y^d)$, Alice and  Bob have access to a common binary symmetric random variable $Z$ (Figure \ref{fig:Overview}(b)). The agents then generate $U_d=f_d(Z,X^d)$ and $V_d=g_d(Z,Y^d)$ to simulate a target distribution $Q_{U,V}$. We call this classical scenario the common randomness NISS scenario (CR-NISS). It is straightforward to see that CR-NISS is a special case of EA-NISS, where the agents choose their measurements such that they acquire a common random bit as the output. We wish to investigate whether there are conditions under which EA-NISS can simulate a larger set of distributions compared to CR-NISS. Our main contributions are summarized as follows:
\begin{itemize}[leftmargin=*]
    \item We provide a necessary and sufficient condition for simulatability in the binary-output CR-NISS, when $X^d$ and $Y^d$ are independent of each other. Furthermore, we show any distribution generated in the EA-NISS scenario is simulatable in the corresponding CR-NISS scenario, thus demonstrating no quantum advantage. (Propositions \ref{th:1} and \ref{Prop:1})
    \item We show that there is no quantum advantage in EA-NISS scenarios where $U_d$ and $V_d$ are binary, for arbitrary $P_{X,Y}$ and choice of measurements. (Proposition \ref{th:2} and Theorem \ref{th:3})
    \item We derive a necessary condition for feasibility in the non-binary-output CR-NISS when $X^d$ and $Y^d$ are independent of each other. Subsequently, we prove that the set of simulatable distributions in CR-NISS scenario forms a set of measure zero within the set of simulatable distributions in ER-NISS scenario. ( Proposition \ref{Prop:2} and Theorem \ref{th:4}) 
\end{itemize}
To summarize, the main result of this work is to show that there is indeed quantum advantage in non-binary output NISS scenarios in contrast to binary output NISS scenarios, where there is no quantum advantage.

% Our study explores NISS, distinguishing between classical and quantum methodologies. We contrast Classical Common Randomness NISS (CCR-NISS), where agents share a symmetric random bit, with Entanglement Assisted NISS (EA-NISS), where they share a Bell state. We show that in binary output cases, the sets of simulatable distributions are similar in both scenarios. However, in cases of non-binary outputs, quantum resources demonstrate a significant advantage, expanding the set of simulatable distributions beyond what is achievable with classical randomness.

% Using Boolean Fourier decomposition \cite{o2014analysis, Wolf2008}, we first quantify feasible output distributions in CCR-NISS. We then explore simulatable output distributions in EA-NISS, assessing various configurations of measurement and correlation in the agents' observed sequences. Our findings highlight the unique strengths of quantum entanglement, particularly in enhancing the range of simulatable distributions for non-binary outputs, thus opening new avenues in distributed quantum computing and communication.

\section{Preliminaries}
\subsection{Problem Formulation}
This work considers two NISS scenarios, a quantum-classical EA-NISS scenario (Figure \ref{fig:Overview}(a)) and purely classical CR-NISS scenario (Figure \ref{fig:Overview}(b)). Our objective is to identify scenarios under which EA-NISS setup can simulate distributions which cannot be simulated in the CR-NISS scenario.

\subsubsection{Entanglement-Assisted NISS} The scenario is shown in Figure \ref{fig:Overview}(a). Two agents, Alice and Bob, are each provided with a sequences of binary random variables $X^d$ and $Y^d$, respectively, where $d\in \mathbb{N}$. These sequences are independently and identically generated from a joint probability distribution $P_{X,Y}$. Additionally, Alice and Bob share an entangled Bell state, $|\Phi^+\rangle$. They independently perform quantum measurements, denoted as $\mathcal{M}_i, i\in \{1,2\}$, on their parts of the entangled state, yielding classical outcomes $Z_1$ and $Z_2$, respectively. The agents wish to produce variables $U_d$ and $V_d$ to simulate a target distribution $Q_{U,V}$ using (possibly stochastic) function of their respective inputs $(Z_1, X^d)$ and $(Z_2,Y^d)$.
The formal definition is given in the following.  
%Consider the probability space  $(\mathcal{X}\times \mathcal{Y},P_{X,Y})$, where $\mathcal{X}=\mathcal{Y}=\{-1,1\}$.
%and let $Q_{U,V}$ be a probability measure defined on finite alphabets $\mathcal{U}\times \mathcal{V}$. 
\begin{Definition}[\textbf{EA-NISS}] 
Consider a pair of joint distributions $P_{X,Y}$ defined on alphabet $\{-1,1\}^2$ and $Q_{U,V}$ defined on finite alphabet $\mathcal{U}\times \mathcal{V}$. Let $\mathcal{Z}$ be a finite set. 
The distribution $Q_{U,V}$ is called feasible\footnote{The terms feasible and simulatable are used interchangeably in the literature.} for $(P_{X,Y},\mathcal{Z})$ if there exist:
\\i)  sequences of measurement pairs $\mathcal{M}_{d,i} = (\Lambda_{d,i,z}, z \in \mathcal{Z})_{d\in \mathbb{N}}, i\in \{1,2\}$, where $\mathcal{Z}$ is the set of possible outputs of $\mathcal{M}_{d,i}$ and $\Lambda_{d,i,z}$ are measurement operators, and
\\ii)   sequences of (possibly stochastic) functions $f_d:\{-1,1\}^{d}\times \mathcal{Z}\to \mathcal{U}$ and $g_d:\{-1,1\}^{d}\times \mathcal{Z}\to \mathcal{V}$, such that 
\[\lim_{d\to \infty} d_{TV}(P^{(d)},Q_{U,V})=0,\] where $d_{TV}(\cdot,\cdot)$ denotes the total variation distance, 
$P^{(d)}$ is the joint distribution of $(f_d(Z_{1,d},X^d), g_d(Z_{2,d},Y^d)), d\in \mathbb{N}$, $Z_{1,d}$ and $Z_{2,d}$ are the measurement outputs acquired by Alice and Bob by applying $\mathcal{M}_{d,1}$ and $\mathcal{M}_{d,2}$ to their shared Bell state, respectively, 
and $(X^d,Y^d)$ is independent and identically distributed (IID) according to $P_{X,Y}$. The sequence $(f_d,g_d), d\in \mathbb{N}$ is called an associated sequence of functions of $Q_{U,V}$.
We denote the set of all feasible distributions for $(P_{X,Y},\mathcal{U},\mathcal{V},\mathcal{Z})$ by $\mathcal{P}_{EA}(P_{{X},{Y}},\mathcal{U},\mathcal{V},\mathcal{Z})$. The set of all feasible distribution for $P_{X,Y},\mathcal{U},\mathcal{V}$ is defined as:
\begin{align*}
    \mathcal{P}_{EA}(P_{X,Y},\mathcal{U},\mathcal{V})\triangleq \bigcup_{n\in \mathbb{N}} \mathcal{P}_{EA}(P_{X,Y},\mathcal{U},\mathcal{V},\{1,2,\cdots,n\})
\end{align*}
\label{def:1}
\end{Definition}
\subsubsection{The CR-NISS Scenario}
The scenario is shown in Figure \ref{fig:Overview}(b). In contrast with the EA-NISS scenario, here Alice and Bob are each given shared common symmetric random bit $Z$. The objective is to produce variables $U_d$ and $V_d$ simulating a target distribution $Q_{U,V}$, using (possibly stochastic) functions of their respective inputs $(Z, X^d)$ and $(Z, Y^d)$. 

\begin{Definition}[\textbf{CR-NISS}] 
Consider a joint distribution $P_{X,Y}$ defined on $\{-1,1\}^2$, and a joint distribution $Q_{U,V}$ defined on finite alphabet $\mathcal{U}\times \mathcal{V}$. The distribution $Q_{U,V}$ is called feasible for $P_{X,Y}$ if there exists a sequence of (possibly stochastic) functions $f_d:\{-1,1\}^{d+1}\to \mathcal{U}$ and $g_d:\{-1,1\}^{d+1}\to \mathcal{V}$ such that $\lim_{d\to \infty} d_{TV}(P^{(d)},Q_{U,V})=0$, where $P^{(d)}$ is the joint distribution of $(f_d(Z,X^d), g_d(Z,Y^d))$, $Z$ is a binary symmetric variable, and $(X^d,Y^d)$ is IID generated according to $P_{X,Y}$. We denote the set of all feasible distributions by $\mathcal{P}_{CR}(P_{X,Y},\mathcal{U},\mathcal{V})$.
\label{def:2}
\end{Definition}

Note that the EA-NISS scenario may be viewed as a generalization of the CR-NISS scenario. To elaborate, let us take the pair of measurements performed by Alice and Bob:
\begin{align*}
    \Lambda_{d,i,0}= 
    \begin{bmatrix}
        1&0\\
        0&0
    \end{bmatrix},
    \quad 
        \Lambda_{d,i,1}= 
    \begin{bmatrix}
        0&0\\
        0&1
    \end{bmatrix},
    \quad i\in \{1,2\}.
\end{align*}
Then, Alice and Bob observe the classical measurement output $Z=Z_1=Z_2$, where $Z$ is a binary symmetric variable. Hence, this choice of measurement recovers the CR-NISS problem. Consequently, $\mathcal{P}_{CR}(P_{X,Y},\mathcal{U},\mathcal{V})\subseteq \mathcal{P}_{EA}(P_{X,Y},\mathcal{U},\mathcal{V})$. We wish to investigate whether there are scenarios in which $\mathcal{P}_{EA}(P_{X,Y},$ $\mathcal{U},\mathcal{V})$ strictly contains $\mathcal{P}_{CR}(P_{X,Y},\mathcal{U},\mathcal{V})$.

% It should be noted that a shared Bell state has the inherent capability to replicate the correlation patterns typically associated with a common classical bit by appropriately selecting the measurement bases. When these bases are appropriately chosen, the measurements performed on the entangled qubits of a Bell state can yield outcomes that are analogous to those that would be observed if the agents were sharing a common classical bit. In general, we can express that $\mathcal{P}_{CCR}(P_{\mathsf{X},\mathsf{Y}}) \subseteq \mathcal{P}_{EA}(P_{\mathsf{X},\mathsf{Y}})$. Our aim is to investigate the range of scenarios where  these distributions $\mathcal{P}_{CCR}$ and $\mathcal{P}_{EA}$ are equivalent, and to subsequently examine if $\mathcal{P}_{CCR}$ is a proper subset of $\mathcal{P}_{EA}$, in any specific instance. This analysis will be further elaborated in the subsequent section (see Section ???).

\subsection{Boolean Fourier Expansion}
\label{sec:BFE}
The analysis provided in the next sections relies on the Boolean Fourier expansion techniques which are briefly described in the following. A more complete discussion on Boolean Fourier analysis can be found in \cite{ o2014analysis,Wolf2008}.

Consider a vector $X^d$ consisting of IID variables with alphabet $\mathcal{X}=\{-1,1\}$, where $P_X(1)=p\in (0,1)$. Let $\mu_X=2p-1$ and $\sigma_X= 2\sqrt{p(1-p)}$ be the mean and standard deviation, respectively. 
Let $\mathcal{L}_{X^d}$ be the space of functions $f_d:\{-1,1\}^d \to \mathbb{R}$ equipped with the inner-product \[\langle f_d(\cdot), g_d(\cdot)\rangle \triangleq \mathbb{E}(f_d(X^d)g_d(X^d)),\quad  f_d(\cdot),g_d(\cdot)\in \mathcal{L}_{X^d}.\] 
Then, the following collection of \textit{parity} functions forms an orthonormal basis for $\mathcal{L}_{X^d}$;
\begin{align*}
\phi_{\mathcal{S}}(x^d)\triangleq \prod_{i\in \mathcal{S}}\frac{x_i-\mu_X}{\sigma_X},\qquad  x^d \in \{-1,1\}^d, \mathcal{S}\subseteq [d].
\end{align*}
For an arbitrary $f_d\in \mathcal{L}_{X^d}$, the Boolean Fourier expansion is:
\begin{align*}
 f_d(x^d)=\sum_{\mathcal{S}\subseteq [d]} f_\mathcal{S} \phi_\mathcal{S}(x^d), \quad \text{for all} \ x^d \in \{-1,1\}^d,
\end{align*}
where $f_\mathcal{S}\!=\!\langle f_d(.), \phi_\mathcal{S}(.) \rangle, \mathcal{S}\subseteq [d]$ are the Fourier coefficients.

Consider a pair of correlated binary random variables $(X^d,Y^d)$ distributed according to the joint distribution $P_{X,Y}$, and a pair of functions $f_d,g_d\in \mathcal{L}_{X^d}\times \mathcal{L}_{Y^d}$. We have:
  \begin{align}\label{eq:1}
  & \mathbb{E}(f_d(X^d)g_d(Y^d))=\sum_{\mathcal{S}\subseteq[d]} f_{\mathcal{S}}g_{\mathcal{S}}\mathbb{E}\big(\prod_{i\in \mathcal{S}} \frac{(X_i-\mu_X)(Y_i-\mu_Y)}{\sigma_X\sigma_Y}\big)
     = \sum_{\mathcal{S}\subseteq [d]} f_{\mathcal{S}}g_{\mathcal{S}} \rho^{|\mathcal{S}|},
  \end{align}
where $\rho\triangleq  \mathbb{E}\left(\frac{(X-\mu_X)(Y-\mu_Y)}{\sigma_X\sigma_Y}\right)$ is the Pearson correlation coefficient between $X$ and $Y$. %In the next section we extend the application of Boolean Fourier expansion to functions with finite alphabets outputs by using the overparametrized vector representation of functions.

\section{Binary-Output NISS Scenarios}
In this section, we restrict our analysis to binary-output NISS scenarios, where $\mathcal{U}=\mathcal{V}=\{-1,1\}$. We show that any distribution that can be simulated in a binary-output EA-NISS scenario, can also be simulated in the corresponding CR-NISS scenario. Thus, there is no quantum advantage in binary-output NISS scenarios. The proof is provided in several steps. We start with a special case, where $(X^d,Y^d)$ are classical local randomness, i.e., $X$ and $Y$ are independent of each other. We incrementally build upon the ideas introduced in analyzing this special case to prove the result for general binary-output NISS. 
\subsection{Binary Measurements and Local Classical Randomness}\label{sec: Binary_output_classical}
As a first step, we consider the scenario where the measurements $\mathcal{M}_{d,i},i \in \{1,2\}, d\in \mathbb{N}$ performed by Alice and Bob have binary-valued outputs, i.e., $\mathcal{Z}=\{-1,1\}$. We further assume that, in addition to the shared Bell state in the EA-NISS scenario and the classical common random bit in the CR-NISS scenario, 
Alice and Bob only have access to classical local randomness, i.e., $P_{X,Y}=P_XP_Y$. We show that in this scenario there is no quantum advantage. That is, we show that $\mathcal{P}_{EA}(P_XP_Y,\mathcal{U},\mathcal{V}, \{-1,1\})=\mathcal{P}_{CR}(P_XP_Y,\mathcal{U},\mathcal{V})$ for all $P_X,P_Y$ defined on binary alphabets. The following proposition formalizes the main claim in this section.
\begin{Proposition}\label{th:1}
 Let $P_X$ and $P_Y$ be two probability distributions on $\{-1,1\}$, and let $\mathcal{U}=\mathcal{V}=\mathcal{Z}=\{-1,1\}$. Then, $\mathcal{P}_{EA}(P_XP_Y,\mathcal{U},\mathcal{V}, \mathcal{Z})=\mathcal{P}_{CR}(P_XP_Y,\mathcal{U},\mathcal{V})$
\end{Proposition}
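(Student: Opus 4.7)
The plan is to verify the two inclusions separately. The containment $\mathcal{P}_{CR}(P_XP_Y,\mathcal{U},\mathcal{V})\subseteq \mathcal{P}_{EA}(P_XP_Y,\mathcal{U},\mathcal{V},\mathcal{Z})$ was already observed after Definition~\ref{def:2} (take computational-basis projections), so the substantive direction is the reverse inclusion. I will reduce it to a single structural claim about the shared correlated randomness: for every pair of binary POVMs on the Bell state $|\Phi^+\rangle$, the joint distribution of the outcomes $(Z_{1,d},Z_{2,d})$ is a uniform mixture of two product distributions on $\{-1,+1\}^2$. Given this, one can simulate $(Z_{1,d},Z_{2,d})$ in the CR-NISS setting by letting the shared bit $Z$ select the mixture component and using the stochasticity of $(f_d,g_d)$ to sample the appropriate product marginals; feeding the simulated pair into the original $(f_d,g_d)$ then reproduces the EA-NISS joint law of $(U_d,V_d)$, because $X^d$ and $Y^d$ are independent and play identical roles in both scenarios.

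The decomposition claim is the analytic heart of the argument. Using the identity $\langle\Phi^+|A\otimes B|\Phi^+\rangle=\tfrac12\mathrm{tr}(A^{T}B)$ for qubit operators, one has
\[
P_{Z_{1,d},Z_{2,d}}(i,j)=\tfrac12\,\mathrm{tr}\bigl(\Lambda_{d,1,i}^{T}\,\Lambda_{d,2,j}\bigr).
\]
Since $\Lambda_{d,1,+1}^{T}$ is Hermitian positive semidefinite with spectrum in $[0,1]$, decompose it spectrally as $\lambda_1|\psi_1\rangle\langle\psi_1|+\lambda_2|\psi_2\rangle\langle\psi_2|$; the complementary element $\Lambda_{d,1,-1}^{T}=I-\Lambda_{d,1,+1}^{T}$ then has eigenvalues $1-\lambda_k$ in the same basis. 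Expanding the trace yields
\[
P_{Z_{1,d},Z_{2,d}}(i,j)=\sum_{k=1}^{2}\tfrac12\,\nu_{i,k}\,\beta_{j,k},
\]
with $\nu_{+1,k}=\lambda_k,\ \nu_{-1,k}=1-\lambda_k$, and $\beta_{j,k}=\langle\psi_k|\Lambda_{d,2,j}|\psi_k\rangle$. For each $k$, $(\nu_{i,k})_i$ and $(\beta_{j,k})_j$ are bona-fide probability distributions on $\{-1,+1\}$ (the latter because $\Lambda_{d,2,+1}+\Lambda_{d,2,-1}=I$), which is exactly the asserted uniform mixture-of-products. The fact that both mixing weights equal $\tfrac12$ traces back to the reduced state of $|\Phi^+\rangle$ being maximally mixed, which is precisely what makes the simulation compatible with a \emph{fair} shared bit.

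The CR-NISS construction is then direct: set $Z\in\{1,2\}$ uniformly, let Alice sample $\hat Z_1\sim(\nu_{\cdot,Z})$ and Bob sample $\hat Z_2\sim(\beta_{\cdot,Z})$ using the local randomness permitted by the stochasticity of $(f_d,g_d)$, and then apply $f_d(\hat Z_1,X^d)$ and $g_d(\hat Z_2,Y^d)$. Conditional on $Z$ the simulated pair is independent, so its marginal joint law matches $P_{Z_{1,d},Z_{2,d}}$; combined with the independence of $X^d$ and $Y^d$, the joint law of the outputs is identical to the EA-NISS joint law, so total variation convergence to $Q_{U,V}$ is inherited along the same sequence $d\to\infty$. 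The main technical hurdle is the decomposition above, which rests on the spectral theorem for Hermitian matrices and the partial-transpose identity for the Bell state; the remainder of the argument is distributional bookkeeping.
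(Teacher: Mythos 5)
Your proposal is correct, and it takes a genuinely different route from the paper's. The paper proves the nontrivial inclusion $\mathcal{P}_{EA}\subseteq\mathcal{P}_{CR}$ in two stages: it first derives, via Boolean Fourier expansion, an exact characterization of $\mathcal{P}_{CR}(P_XP_Y,\mathcal{U},\mathcal{V})$ (Proposition~\ref{Prop:1}, the constraint $|Q_{U,V}(-1,-1)+Q_{U,V}(1,1)-\zeta_{a,b}|\le 2\beta_{a,b}$), and then verifies that the Bell-measurement outcome distribution satisfies this inequality by writing $P_{Z_1,Z_2}(z_1,z_2)=\tfrac12\mathrm{Vec}(\Lambda_{d,1,z_1})^{T}\mathrm{Vec}(\Lambda_{d,2,z_2})$, parametrizing the POVM entries, and combining positive semidefiniteness ($|f|^2\le eg$, $|n|^2\le qh$) with an AM--GM step. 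You bypass the characterization entirely: your spectral decomposition of $\Lambda_{d,1,+1}^{T}$ exhibits $P_{Z_1,Z_2}$ as a uniform mixture of two product distributions --- a one-fair-bit local-hidden-variable model for the fixed measurement pair --- which the shared bit plus local stochasticity simulates exactly before re-applying $(f_d,g_d)$. The algebra checks out: the transpose of a Hermitian contraction is again Hermitian PSD with spectrum in $[0,1]$, each $(\beta_{j,k})_j$ is a distribution by the completeness relation, and the weights are exactly $\tfrac12$ because the reduced state of $|\Phi^+\rangle$ is maximally mixed. Two observations on what each approach buys. First, your argument is stronger than you claim: nowhere do you actually use $P_{X,Y}=P_XP_Y$ --- all that matters is that $(\hat Z_1,\hat Z_2)$ has law $P_{Z_1,Z_2}$ and is independent of $(X^d,Y^d)$, which holds for arbitrarily correlated $P_{X,Y}$ --- so your proof yields Proposition~\ref{th:2} directly and makes the ``patching'' construction of Appendix~\ref{App:th:2} unnecessary in the binary-measurement case. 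Second, the paper's longer route produces Proposition~\ref{Prop:1} itself, a necessary-and-sufficient description of the CR-feasible set that is reused downstream (e.g., in the Fourier machinery behind Proposition~\ref{Prop:2} and Theorem~\ref{th:4}), whereas your decomposition establishes containment without characterizing $\mathcal{P}_{CR}$. One point you should make explicit: Bob's sampling distribution $\beta_{\cdot,k}=\langle\psi_k|\Lambda_{d,2,\cdot}|\psi_k\rangle$ depends on Alice's eigenbasis, which is harmless here because the measurements are fixed in advance and known to both parties when the protocol is designed, but it is also why your construction does not immediately extend to the adaptive setting of the paper's Corollary~2, where Alice's POVM depends on her private realization $x^d$.
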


To prove Proposition \ref{th:1}, we first characterize the set of distributions which can be simulated classically. The following proposition provides a necessary and sufficient condition for such distributions.

% \begin{Lemma}\label{lm: Classical_Bound_Binary}
%     Let Alice and Bob share a symmetric random bit $Z$ and each has access to an arbitrary number of independent local random variables, denoted \(X^d\) and \(Y^d\), respectively. Their objective is to create binary outputs \(U\) and \(V\) to simulate a joint probability distribution \(Q_{UV}\). Define the set of simulatable distributions with specific marginals \(Q_U(1)=a\) and \(Q_V(1)=b\) as \(\mathcal{P}_{CCR}(Q_U,Q_V)\). This set can be characterized by:
%     \begin{align} \label{eq: classical_bound_lemma}
%         \mathcal{P}_{CCR}(Q_U,Q_V) = \Big\{ Q_{U,V} \mid & Q_U(1)=a, Q_V(1)=b, \nonumber \\
%         & \zeta_{a,b} - 2\beta_{ab} \leq Q_{U,V}(U=V) \leq \zeta_{a,b} - 2\beta_{ab}\Big\},
%     \end{align}
%     where $\zeta_{a,b}\triangleq 2ab-a-b+1$ and $\beta_{a,b} \triangleq  \min\{a, (1 - a)\} \min\{b, (1 - b)\}$. Consequently, the set of all possible simulatable joint distributions, denoted as \(\mathcal{Q}_1\), is found by:
%     \begin{align*}
%         \mathcal{Q}_1 = \bigcup_{a,b \in [0,1]} \mathcal{P}_{CCR}(Q_U,Q_V).
%     \end{align*}
% \end{Lemma}
\begin{Proposition}\label{Prop:1}
Let $P_X$ and $P_Y$ be two probability distributions defined on $\{-1,1\}$ such that $P_X(1),P_Y(1)\notin\{0,1\}$ and let $\mathcal{U}=\mathcal{V}=\{-1,1\}$. Then, 
    \begin{align} \label{eq: classical_bound_lemma}
   &     \mathcal{P}_{CR}(P_XP_Y,\mathcal{U},\mathcal{V}) = \bigcup_{a,b \in [0,1]}\Big\{ Q_{U,V} \Big| Q_U(1)=a, Q_V(1)=b, 
         |Q_{U,V}(-1,-1)+Q_{U,V}(1,1)- \zeta_{a,b}| \leq  2\beta_{ab}\Big\},
    \end{align}
    where
    \begin{align*}
    &\zeta_{a,b}\triangleq 2ab-a-b+1, \quad a,b\in [0,1],
    \\& \beta_{a,b} \triangleq  \min\{a, (1 - a)\} \min\{b, (1 - b)\},  \quad a,b\in [0,1].
    \end{align*}
\end{Proposition}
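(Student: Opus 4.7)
The plan is to prove the two set inclusions separately, exploiting the conditional-independence structure forced by the independence of $X$ and $Y$.

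\textbf{Converse.} The key structural observation is that, since $X^d$ and $Y^d$ are independent and any local randomness used by Alice can be taken independent of Bob's, the simulated variables satisfy $U_d \perp V_d \mid Z$ for every $d \in \mathbb{N}$. Denoting $a^{(d)}_z \triangleq P(U_d = 1 \mid Z = z)$ and $b^{(d)}_z \triangleq P(V_d = 1 \mid Z = z)$ for $z \in \{-1,1\}$, the joint law $P^{(d)}$ is therefore a uniform mixture of two product distributions indexed by $z$. By compactness of $[0,1]^4$, a subsequence of $(a^{(d)}_{-1}, a^{(d)}_{1}, b^{(d)}_{-1}, b^{(d)}_{1})$ converges to some $(a_{-1}, a_1, b_{-1}, b_1)$, and continuity forces the limit $Q_{U,V}$ to inherit the product-mixture form. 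Writing $a_z = a + zt$ and $b_z = b + zs$ with $a = Q_U(1) = (a_{-1}+a_1)/2$ and $b = Q_V(1) = (b_{-1}+b_1)/2$ (valid since $z \in \{-1,1\}$), a direct expansion gives
\begin{equation*}
Q_{U,V}(1,1) + Q_{U,V}(-1,-1) - \zeta_{a,b} \;=\; \tfrac{1}{2}(a_1 - a_{-1})(b_1 - b_{-1}) \;=\; 2ts.
\end{equation*}
The constraints $a_z, b_z \in [0,1]$ together with the averaging conditions force $|t| \leq \min(a, 1-a)$ and $|s| \leq \min(b, 1-b)$, so $|2ts| \leq 2\beta_{a,b}$, which is exactly the membership condition on the right-hand side.

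\textbf{Achievability.} Conversely, fix $Q_{U,V}$ in the right-hand side with marginals $a, b$ and set $\gamma \triangleq Q_{U,V}(1,1) - ab$, so that the constraint reads $|\gamma| \leq \beta_{a,b}$. I would pick $t \in [-\min(a,1-a), \min(a,1-a)]$ and $s \in [-\min(b,1-b), \min(b,1-b)]$ with $ts = \gamma$ (for example, $t = \mathrm{sign}(\gamma)\min(a,1-a)$ and $s = \gamma/t$ when $a \notin \{0,1\}$; the degenerate cases force $\beta_{a,b} = 0$ and $\gamma = 0$ and are trivially handled). Define $a_z = a + zt$ and $b_z = b + zs$. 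Because $P_X(1), P_Y(1) \in (0,1)$, a standard biased-bit-extraction scheme applied to blocks of $X^d$ (for instance, von Neumann debiasing followed by arithmetic sampling) lets Alice produce, to within arbitrarily small total variation error, a local Bernoulli random variable with any prescribed parameter, using only a vanishing fraction of her input as $d \to \infty$. Letting her output $U_d$ be such a Bernoulli$(a_Z)$ variable and having Bob do the symmetric construction with parameter $b_Z$ enforces $U_d \perp V_d \mid Z$ with the correct conditional probabilities, and the joint law then converges in total variation to $Q_{U,V}$.

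\textbf{Main obstacle.} The only substantive step is the algebraic reduction collapsing the three-parameter family of target distributions (with marginals $a, b$ fixed) to the single scalar constraint $|\gamma| \leq \beta_{a,b}$, and verifying that the extremal product-mixtures saturate the bound; the remaining ingredients—conditional independence given $Z$, compactness of the mixture-parameter space, and biased-bit extraction from IID sequences—are standard and require no hypercontractivity or Fourier-analytic machinery for this proposition.
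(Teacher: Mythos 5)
Your proof is correct, and your converse takes a genuinely more elementary route than the paper's. The paper's proof expands $f_d(Z,X^d)$ and $g_d(Z,Y^d)$ in the Boolean Fourier bases of $\mathcal{L}_{Z,X^d}$ and $\mathcal{L}_{Z,Y^d}$, uses the independence of $X^d$ and $Y^d$ to annihilate every cross term, and arrives at $\mathbb{E}(U_dV_d)=f_{0,\phi}g_{0,\phi}+f_{1,\phi}g_{1,\phi}$ together with the coefficient bounds $|f_{1,\phi}|\leq 2\min\{a,1-a\}$ and $|g_{1,\phi}|\leq 2\min\{b,1-b\}$ extracted from $f_{0,\phi}\pm f_{1,\phi}\in[-1,1]$. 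Your conditional-independence decomposition given $Z$ is the probabilistic shadow of exactly this computation: writing $a_z=\tfrac{1}{2}(1+f_{0,\phi}+f_{1,\phi}z)$ gives $t=f_{1,\phi}/2$ and $s=g_{1,\phi}/2$, and your identity $Q_{U,V}(1,1)+Q_{U,V}(-1,-1)-\zeta_{a,b}=2ts$ is Equation \eqref{eq:Prop1:4.5} in disguise. What your route buys is elementarity (no Fourier machinery is needed once you observe $U_d\perp V_d\mid Z$, which is valid here because the common bit is independent of the product source and the private randomizations are independent) and slightly more care at the limit: the paper derives the constraint at each fixed $d$ and passes to $d\to\infty$ implicitly via closedness of the constraint set, whereas your compactness/subsequence argument makes the passage explicit. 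What the paper's route buys is reusability: the same Fourier decomposition is the engine behind Proposition \ref{th:2}, where $(X,Y)$ are correlated and conditioning on $Z$ no longer decouples Alice and Bob (the nonempty-$\mathcal{S}$ terms then contribute $\rho^{|\mathcal{S}|}$ factors as in Equation \eqref{eq:1}), and behind Proposition \ref{Prop:2}, so the authors set the machinery up once here. Your achievability is essentially the paper's construction: both sides generate conditionally independent Bernoulli variables whose means are affine in $Z$, using local randomness distilled from unused input samples; the only difference is cosmetic, in that you exhibit the factorization $ts=\gamma$ explicitly while the paper invokes the intermediate value theorem to find $\tilde{f}_{1,\phi}\tilde{g}_{1,\phi}=2(P(U=V)-\zeta_{a,b})$, and the two coincide since $2(P(U=V)-\zeta_{a,b})=4ts=(2t)(2s)$.
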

The proof is based on Fourier expansion techniques (Section \ref{sec:BFE}) and is given in Appendix \ref{App:prop:1}. To prove Proposition \ref{th:1}, it suffices to show that any feasible distribution for the EA-NISS scenario satisfies Equation \eqref{eq: classical_bound_lemma}. This is proved in Appendix \ref{App:th:1}. The next corollary follows from the fact that any given  $P'_XP'_Y$ may be transformed to a desired $P_XP_Y$ locally (e.g. \cite{VonNeumann1951}). 
 \begin{Corollary}
 \label{cor:1}
 Let $P_X,P'_X,P_Y$ and $P'_Y$ be probability distributions on $\{-1,1\}$, and let $\mathcal{U}=\mathcal{V}=\mathcal{Z}=\{-1,1\}$. Then, $\mathcal{P}_{EA}(P_XP_Y,\mathcal{U},\mathcal{V}, \mathcal{Z})=\mathcal{P}_{CR}(P'_XP'_Y,\mathcal{U},\mathcal{V})$.\footnote{We are grateful to Professor Masahito Hayashi for the engaging discussions following the initial publication of this result on the arXiv repository. Professor Hayashi provided an alternative proof specific to the binary-output measurement case that avoids the use of Fourier transforms and does not require the condition $\mathcal{U} = \mathcal{V}=\{-1,1\}$.}     
 \end{Corollary}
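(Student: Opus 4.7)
The plan is to combine Proposition \ref{th:1} with the observation that the characterization in Proposition \ref{Prop:1} is insensitive to the choice of non-degenerate marginals. Concretely, Proposition \ref{th:1} already gives $\mathcal{P}_{EA}(P_XP_Y,\mathcal{U},\mathcal{V},\mathcal{Z})=\mathcal{P}_{CR}(P_XP_Y,\mathcal{U},\mathcal{V})$, so the task reduces to showing $\mathcal{P}_{CR}(P_XP_Y,\mathcal{U},\mathcal{V})=\mathcal{P}_{CR}(P'_XP'_Y,\mathcal{U},\mathcal{V})$ whenever both product distributions are non-degenerate. But the right-hand side of \eqref{eq: classical_bound_lemma} depends only on the output alphabet and on the target marginal parameters $(a,b)$; neither $P_X$ nor $P_Y$ appears in the expressions for $\zeta_{a,b}$ and $\beta_{a,b}$. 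Hence the two CR-NISS feasible sets are identical in form, and chaining the equalities yields the corollary immediately.

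As the footnote suggests, an alternative and more conceptual route is to argue at the level of protocols. Given any CR-NISS simulator $(f_d,g_d)$ that achieves $Q_{U,V}$ from $(P_XP_Y)^{\otimes d}$ and a shared bit $Z$, I would first have Alice apply a local randomness-extraction procedure to a long IID block $\tilde{X}^{D}\sim (P'_X)^{\otimes D}$ to produce an IID block $X^{d(D)}\sim (P_X)^{\otimes d(D)}$, in the style of von Neumann's simulation of a fair coin from biased coins; Bob performs the analogous transformation on his side. Since these transformations are local and independent, the composed protocol takes $(\tilde{X}^{D},\tilde{Y}^{D})$ together with $Z$ as input and reproduces the same joint output distribution as the original simulator, conditioned on both extractors delivering at least $d$ samples. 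Choosing $D$ large enough relative to $d$ makes the failure event have vanishing probability, so convergence of the total variation distance to $Q_{U,V}$ is preserved.

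The only delicate point in this second approach is verifying that the extractor's failure event contributes negligibly to the total variation distance as $D\to \infty$; this is handled by standard concentration bounds on the empirical output length of von Neumann style extractors. After this, the inclusion $\mathcal{P}_{CR}(P_XP_Y,\mathcal{U},\mathcal{V})\subseteq \mathcal{P}_{CR}(P'_XP'_Y,\mathcal{U},\mathcal{V})$ follows, and the reverse inclusion is obtained by swapping the roles of the primed and unprimed distributions. Combined with Proposition \ref{th:1}, this yields the corollary for arbitrary non-degenerate choices of $P_XP_Y$ and $P'_XP'_Y$; the degenerate cases (where some marginal puts all mass on a single point) are trivial since the corresponding source provides no usable randomness and both scenarios collapse to a simulator driven purely by $Z$ (respectively, the Bell state).
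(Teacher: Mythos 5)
Your proposal is correct, and its second route is essentially the paper's own proof: the paper justifies Corollary~\ref{cor:1} with the single remark that any given $P'_XP'_Y$ may be transformed into a desired $P_XP_Y$ locally, citing von Neumann's method --- exactly your extraction-based protocol composition, except that the paper never spells out the failure-probability bookkeeping that you correctly flag as the one delicate step. Your first route is genuinely different and arguably cleaner: since the characterization \eqref{eq: classical_bound_lemma} in Proposition~\ref{Prop:1} depends only on the target marginal parameters $(a,b)$ and not on $P_X,P_Y$, one gets $\mathcal{P}_{CR}(P_XP_Y,\mathcal{U},\mathcal{V})=\mathcal{P}_{CR}(P'_XP'_Y,\mathcal{U},\mathcal{V})$ for free and chains this with Proposition~\ref{th:1}, avoiding extractors and concentration bounds entirely. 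What the paper's (and your second) approach buys in exchange is robustness: the local-transformation argument does not rely on the explicit binary-output characterization, so it survives in settings where no closed-form feasible set is available (cf.\ the footnote's mention of an argument not requiring $\mathcal{U}=\mathcal{V}=\{-1,1\}$), whereas your first route is tied to Proposition~\ref{Prop:1}. One refinement on your treatment of degenerate marginals: rather than saying the source ``provides no usable randomness,'' the cleanest justification is that Definitions~\ref{def:1} and~\ref{def:2} permit \emph{stochastic} $f_d,g_d$, so each agent has private randomness regardless of the source; when $X$ and $Y$ are independent the source sequences are then entirely redundant, and both scenarios reduce to simulators driven by $Z$ (respectively the Bell state) plus local randomness --- a remark that is actually needed, since Proposition~\ref{Prop:1} as stated excludes $P_X(1),P_Y(1)\in\{0,1\}$, and a degenerate source without the stochastic-function clause would yield a strictly smaller feasible set.
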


\subsection{Binary Measurements and Correlated Randomness}
We consider binary-otuput measurements $\mathcal{M}_{d,i},i\in \{1,2\},d\in \mathbb{N}$, and arbitrary $P_{X,Y}$. The following theorem states the main result of the section. 
\begin{Proposition}\label{th:2}
 Let $P_{X,Y}$ be a joint probability distribution on $\{-1,1\}^2$, and let $\mathcal{U}=\mathcal{V}=\mathcal{Z}=\{-1,1\}$. Then, $\mathcal{P}_{EA}(P_{X,Y}, \mathcal{U},\mathcal{V},\mathcal{Z})=\mathcal{P}_{CR}(P_{X,Y},\mathcal{U},\mathcal{V})$.
\end{Proposition}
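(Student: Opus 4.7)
The inclusion $\mathcal{P}_{CR}(P_{X,Y},\mathcal{U},\mathcal{V})\subseteq\mathcal{P}_{EA}(P_{X,Y},\mathcal{U},\mathcal{V},\{-1,1\})$ follows from the measurement assignment noted after Definition \ref{def:2}, which turns the Bell state into a shared symmetric bit. For the reverse inclusion, the plan is to show that in a binary-measurement EA-NISS protocol the Bell state contributes only a classical pair $(Z_{1,d},Z_{2,d})$ independent of $(X^d,Y^d)$, and that every such joint distribution on $\{-1,1\}^2$ is reproducible in the CR-NISS scenario by stochastic post-processing of the common bit $Z$. Once a surrogate $(\tilde Z_{1,d},\tilde Z_{2,d})$ with the same law is available, Alice and Bob apply the unchanged output functions $f_d,g_d$, and the resulting $(U_d,V_d)$ has the same joint distribution because the surrogate is also independent of $(X^d,Y^d)$.

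The technical heart is a two-sided characterization of the achievable set for $(Z_1,Z_2)$. For the quantum side, write each binary POVM as $\Lambda_{d,i,1}=\tfrac12(\alpha_{i,0}I+\vec\alpha_i\cdot\vec\sigma)$ with $\|\vec\alpha_i\|\le\min(\alpha_{i,0},2-\alpha_{i,0})$. The identity $\langle\Phi^+|A\otimes B|\Phi^+\rangle=\tfrac12\mathrm{Tr}(AB^{\mathsf{T}})$ together with $\sigma_y^{\mathsf{T}}=-\sigma_y$ gives $p_{i,d}=\alpha_{i,0}/2$ and
\begin{align*}
\mathbb E[Z_{1,d}Z_{2,d}]-\mu_{1,d}\mu_{2,d}=\alpha_{1,x}\alpha_{2,x}-\alpha_{1,y}\alpha_{2,y}+\alpha_{1,z}\alpha_{2,z},
\end{align*}
with $\mu_{i,d}=2p_{i,d}-1$. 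Since the right-hand side is a Euclidean inner product after a sign flip on the $y$-components, Cauchy--Schwarz combined with the Bloch-norm bound yields $|\mathbb E[Z_{1,d}Z_{2,d}]-\mu_{1,d}\mu_{2,d}|\le 4\beta_{p_{1,d},p_{2,d}}$, with the entire interval attained by collinear Bloch vectors.

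For the classical side, parameterize the CR-NISS surrogate $\tilde Z_{i,d}=h_{i,d}(Z,R_i)$ by $\alpha_z=\mathbb E[\tilde Z_{1,d}\mid Z=z]\in[-1,1]$ and analogously $\beta_z$; conditional independence given $Z$ gives $\mathbb E[\tilde Z_{1,d}\tilde Z_{2,d}]-\mu_{1,d}\mu_{2,d}=\tfrac14(\alpha_0-\alpha_1)(\beta_0-\beta_1)$, which, subject to $\alpha_0+\alpha_1=2\mu_{1,d}$ and the analogous constraint on $\beta$, sweeps the full interval $[-4\beta_{p_{1,d},p_{2,d}},4\beta_{p_{1,d},p_{2,d}}]$. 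Matching marginals and correlation therefore produces a classical surrogate with the same joint law as $(Z_{1,d},Z_{2,d})$; plugging this into the original $f_d,g_d$ and letting $d\to\infty$ yields $Q_{U,V}\in\mathcal{P}_{CR}(P_{X,Y},\mathcal{U},\mathcal{V})$.

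The main obstacle is the quantum-side computation of the achievable $(Z_1,Z_2)$ set, which requires a POVM parameterization accommodating non-projective measurements and non-uniform marginals; a secondary subtlety is that, in contrast to Proposition \ref{th:1}, the matching must reproduce both the marginals and the correlation simultaneously. The Bloch parameterization on the quantum side and the two-parameter conditional-mean construction on the classical side together show this is always possible, and because the ancilla replacement acts only on the quantum component of the protocol, the argument is uniform in $P_{X,Y}$ and does not require $X$ and $Y$ to be independent, thereby extending Proposition \ref{th:1} to arbitrary binary $P_{X,Y}$.
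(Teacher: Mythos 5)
Your proof is correct, but it takes a genuinely different route from the paper's. The paper never simulates the quantum resource itself: it fixes a realization $(x^d,y^d)$, observes that conditionally the outputs are stochastic functions of $(Z_1,Z_2)$ alone, invokes Corollary \ref{cor:1} to obtain per-realization classical simulators $f_{x^d,y^d},g_{x^d,y^d}$, and then---because Alice knows only $x^d$ and Bob only $y^d$---patches these together via a moment-matching construction ($f^+_{x^d}$ and $g^{\pm}_{y^d}$ built from maxima of Fourier coefficients, plus a biased time-sharing bit $T$ with $P_T(1)=p_{ts}=\frac{\rho'-\rho^-}{\rho^+-\rho^-}$) so that the three moments $\mathbb{E}(U)$, $\mathbb{E}(V)$, $\mathbb{E}(UV)$ match; this is where the hypothesis $\mathcal{U}=\mathcal{V}=\{-1,1\}$ is used. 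You instead replace the resource: you characterize exactly which laws $P_{Z_1,Z_2}$ binary POVMs on a Bell state can induce (your Bloch computation giving $|\mathbb{E}[Z_1Z_2]-\mu_1\mu_2|\le 4\beta_{p_1,p_2}$ is equivalent to the paper's condition in Proposition \ref{Prop:1}, since $P(Z_1=Z_2)-\zeta_{a,b}=\frac{1}{2}\left(\mathbb{E}[Z_1Z_2]-\mu_1\mu_2\right)$, and your $\frac12\mathrm{Tr}(AB^{\mathsf{T}})$ identity is the paper's $\frac12 Vec(A)^{T}Vec(B)$), show the identical set is reachable classically from one symmetric bit with local randomness, and feed the surrogate into the unchanged $f_d,g_d$. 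Your route buys simplicity and extra generality in the outputs: because the entire joint law of the protocol is reproduced, the argument nowhere needs binary $\mathcal{U},\mathcal{V}$, recovering (for binary-outcome measurements) the strengthening the paper attributes to Hayashi in the footnote to Corollary \ref{cor:1}. The paper's heavier patching buys something else: it never uses the fact that the measurements are chosen before $X^d,Y^d$ are observed, which is precisely what makes its Corollary 2 (measurement choices adaptive to $x^d$ and $y^d$) immediate; your substitution argument relies essentially on $(Z_{1,d},Z_{2,d})$ being independent of $(X^d,Y^d)$---under adaptive measurements the law of $(Z_1,Z_2)$ varies with $(x^d,y^d)$ jointly and no single source-independent surrogate suffices---so your closing uniformity claim is correct as stated (uniform in $P_{X,Y}$) but does not extend to that corollary. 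Two cosmetic points: your $\beta_z$ for Bob's conditional means collides notationally with the paper's $\beta_{a,b}$, and the conditional-mean identity $\mathbb{E}[\tilde Z_1\tilde Z_2]-\mu_1\mu_2=\frac14(\alpha_0-\alpha_1)(\beta_0-\beta_1)$ together with the attained range $[-4\beta_{p_1,p_2},\,4\beta_{p_1,p_2}]$ is right and matches Proposition \ref{Prop:1}'s two-sided characterization.
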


The proof uses the fact that in the EA-NISS scenario, for any given realization $(X^d,Y^d)=(x^d,y^d)$, the output of Alice $U_{x^d}$ and Bob $V_{y^d}$ are (possibly stochastic) functions of only their measurement outputs $Z_1$ and $Z_2$, respectively. So, for each $(X^d,Y^d)=(x^d,y^d)$, using Proposition \ref{th:1}, the outputs can be simulated using classical processing of a common random bit. The proof follows by \textit{patching together} each of these classical functions for different realizations $(X^d,Y^d)=(x^d,y^d)$ to construct a classical simulation protocol. The complete proof is given in Appendix \ref{App:th:2}. It can be noted that in the proof of Proposition \ref{th:2}, we have not used the fact that the measurement operators are chosen by Alice and Bob prior to the observation of $X^d$ and $Y^d$. In fact, the following corollary, which follows from the proof of Proposition \ref{th:2} states that there is no quantum advantage in this NISS scenario, even if the agents make their choice of measurement dependent on their observed classical sequences. 
\begin{Corollary}
    Let $\mathcal{Z}=\{-1,1\}$, and consider $P_{X,Y}$ and $Q_{U,V}$ defined on $\{-1,1\}^2$, such that there exist
    sequences of collection of measurement pairs
    \begin{align*}
        &\mathcal{M}_{d,x^d,1} = (\Lambda_{d,x^d,1,z}, z \in \mathcal{Z})_{d\in \mathbb{N}, x^d\in \{-1,1\}^d}
        \\& \mathcal{M}_{d,y^d,2} = (\Lambda_{d,y^d,2,z}, z \in \mathcal{Z})_{d\in \mathbb{N}, y^d\in \{-1,1\}^d},
    \end{align*} such that $Q_{U,V}$ can be simulated by Alice and Bob using source sequences $(Z_{1,d},X^d)$ and $(Z_{2,d},Y^d)$, respectively, where  $Z_{1,d}$ and $Z_{2,d}$ are the measurement outputs acquired using $\mathcal{M}_{d,X^d,1}$ and $\mathcal{M}_{d,Y^d,2}$. Then,
$Q_{U,V}\in \mathcal{P}_{CR}(P_{X,Y},\mathcal{U},\mathcal{V})$.
\end{Corollary}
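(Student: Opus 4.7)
The plan is to observe that the patching argument for Proposition~\ref{th:2} goes through verbatim when the measurements are allowed to depend on the classical observations, because that argument only uses the per-realization conditional distribution, not the identity of the measurement producing it. Fix associated post-processing maps $f_d, g_d$ and measurement families $\mathcal{M}_{d,x^d,1}, \mathcal{M}_{d,y^d,2}$ witnessing the hypothesis, and for each $d$ and each $(x^d, y^d) \in \{-1,1\}^{2d}$ define the conditional output law
\begin{align*}
P^{(d)}_{x^d,y^d}(u,v) \triangleq \Pr\bigl(f_d(Z_{1,d}, x^d)=u,\, g_d(Z_{2,d}, y^d)=v \,\bigm|\, X^d=x^d, Y^d=y^d\bigr),
\end{align*}
where $Z_{1,d}, Z_{2,d}$ are the outputs of applying $\mathcal{M}_{d,x^d,1}$ and $\mathcal{M}_{d,y^d,2}$ to the shared Bell state.

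For any fixed $(x^d, y^d)$ the pair $(f_d(Z_{1,d}, x^d), g_d(Z_{2,d}, y^d))$ is just a binary-output EA-NISS protocol with binary-valued measurements on the Bell state and no further classical inputs. Proposition~\ref{th:1}, applied with the classical sources taken to be degenerate (or equivalently Proposition~\ref{th:2} with a trivial $P_{X,Y}$), implies that $P^{(d)}_{x^d,y^d}$ is simulatable from a single shared bit. Hence there exist stochastic maps $\tilde{f}_{d,x^d}, \tilde{g}_{d,y^d} : \{-1,1\} \to \{-1,1\}$ such that, writing $Z$ for a uniform bit, the joint law of $(\tilde{f}_{d,x^d}(Z), \tilde{g}_{d,y^d}(Z))$ equals $P^{(d)}_{x^d,y^d}$. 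Because $\{-1,1\}^d$ is finite, selecting one such pair for each realization is unproblematic; no measurability or selection issue arises.

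Now define the CR-NISS maps $\hat{f}_d(z, x^d) \triangleq \tilde{f}_{d,x^d}(z)$ and $\hat{g}_d(z, y^d) \triangleq \tilde{g}_{d,y^d}(z)$. Alice can evaluate $\hat{f}_d$ from $(Z, X^d)$ alone and Bob can evaluate $\hat{g}_d$ from $(Z, Y^d)$ alone, so this is a valid CR-NISS protocol. Conditioning on $(X^d, Y^d)=(x^d, y^d)$ reproduces $P^{(d)}_{x^d,y^d}$ exactly, so averaging over $P_{X,Y}^d$ reproduces the original joint law of $(f_d(Z_{1,d}, X^d), g_d(Z_{2,d}, Y^d))$. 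Consequently the total variation distance to $Q_{U,V}$ is identical to that of the original EA-NISS protocol and still tends to zero, yielding $Q_{U,V} \in \mathcal{P}_{CR}(P_{X,Y},\mathcal{U},\mathcal{V})$. The only conceptual step that might look like an obstacle is justifying that the per-realization simulation lemma can be invoked when the measurements themselves are $(x^d, y^d)$-dependent; this is resolved by noting that Proposition~\ref{th:1} characterizes the achievable conditional distribution purely in terms of its form, independent of which binary measurements produced it.
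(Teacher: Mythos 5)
Your reduction breaks at the step where you upgrade the per-pair simulability guaranteed by Proposition~\ref{th:1} to simulating maps $\tilde{f}_{d,x^d}$ and $\tilde{g}_{d,y^d}$ indexed only by the \emph{local} sequences. What Proposition~\ref{th:1} (via Corollary~\ref{cor:1}) actually gives, for each fixed pair $(x^d,y^d)$, is a pair of functions $f_{x^d,y^d}$, $g_{x^d,y^d}$ that \emph{jointly} depend on $(x^d,y^d)$: in its proof the coefficients are chosen to satisfy the product constraint $\tilde{f}_{1,\phi}\tilde{g}_{1,\phi}=2(P(U=V)-\zeta_{a,b})$, which depends on the joint conditional law and hence on both sequences. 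Alice, knowing only $x^d$, cannot evaluate $f_{x^d,y^d}$, and your implicit claim that the cross-dependence can be dropped while keeping $(\tilde{f}_{d,x^d}(Z),\tilde{g}_{d,y^d}(Z))\sim P^{(d)}_{x^d,y^d}$ for \emph{every} pair simultaneously is false in general: a family of separated response functions of shared randomness (plus local randomness, which can be absorbed into the hidden variable) is exactly a local hidden variable model for the measured Bell state, and the corollary explicitly allows the measurements to depend on $x^d,y^d$. Take $d=1$, with $X,Y$ independent uniform bits, and let the four measurement choices be CHSH-optimal settings; the four conditional laws then violate the CHSH inequality, so no shared-randomness model reproduces all of them exactly, and your concluding assertion that conditioning on $(X^d,Y^d)=(x^d,y^d)$ ``reproduces $P^{(d)}_{x^d,y^d}$ exactly'' is unachievable.

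The paper's argument avoids exact per-realization matching precisely for this reason: since the outputs are binary, the averaged law is determined by the three moments in Equations~\eqref{eq:th2:1}--\eqref{eq:th2:3}, and the construction in Appendix~\ref{App:th:2} matches only those. It decouples the cross-dependence by setting $f^+_{x^d,1,\phi}=\max_{y^d}|f_{x^d,y^d,1,\phi}|$ and $g^+_{y^d,1,\phi}=\max_{x^d}|g_{x^d,y^d,1,\phi}|$ (each computable from local information alone), which brackets the true averaged correlation $\rho'$ between $\rho^-$ and $\rho^+$, and then has Bob time-share between $g^+_{y^d}$ and $g^-_{y^d}$ via a bit $T$ with $P_T(1)=p_{ts}=(\rho'-\rho^-)/(\rho^+-\rho^-)$. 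No individual conditional law is reproduced --- only the mixture is --- and that construction never uses the fact that the measurement operators are fixed before $X^d,Y^d$ are observed, which is exactly why the corollary follows from the proof of Proposition~\ref{th:2}. Your diagnosis that the patching argument ``only uses the per-realization conditional distribution'' therefore mischaracterizes it; to repair your write-up you must replace the exact-matching step with this moment-matching and time-sharing construction, so the gap is a wrong central step rather than a missing technicality.
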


\subsection{Non-Binary Measurements}
In this section, we extend the results of previous sections to non-binary valued measurements. The following states the main result of this section. 
\begin{Theorem}\label{th:3}
 Let $P_{X,Y}$ be a joint probability distribution on $\{-1,1\}^2$ and $\mathcal{U}=\mathcal{V}=\{-1,1\}$. Then, $\mathcal{P}_{EA}(P_{X,Y},\mathcal{U},\mathcal{V})=\mathcal{P}_{CR}(P_{X,Y},\mathcal{U},\mathcal{V})$.
\end{Theorem}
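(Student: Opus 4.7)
The direction $\mathcal{P}_{CR}(P_{X,Y},\mathcal{U},\mathcal{V}) \subseteq \mathcal{P}_{EA}(P_{X,Y},\mathcal{U},\mathcal{V})$ is immediate by having the two agents measure the shared Bell state in the computational basis, which produces a pair of perfectly correlated uniform bits realizing the common randomness $Z$. For the reverse inclusion, the plan is to reduce the general EA-NISS setup with arbitrary finite measurement alphabet $\mathcal{Z}$ to the binary-measurement setting already handled; more precisely, to the Corollary following Proposition \ref{th:2}, which crucially permits the choice of measurement to depend on the classical observations.

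Fix an arbitrary $Q_{U,V} \in \mathcal{P}_{EA}(P_{X,Y},\mathcal{U},\mathcal{V},\mathcal{Z})$ together with its associated measurements $\mathcal{M}_{d,i}=(\Lambda_{d,i,z})_{z\in\mathcal{Z}}$ and post-processing functions $f_d,g_d$. First I would assume without loss of generality that $f_d$ and $g_d$ are deterministic: any private randomization can be absorbed into additional independent local bits appended to $X^d$ and $Y^d$, which is classical local randomness available in both scenarios and leaves the relevant feasibility sets unchanged. The core step is to binarize each measurement by absorbing the post-processing function into the POVM itself. For every $x^d\in\{-1,1\}^d$, define
\begin{align*}
\widetilde{\Lambda}_{d,x^d,1,u} \triangleq \sum_{z\in\mathcal{Z}\,:\,f_d(z,x^d)=u}\Lambda_{d,1,z},\qquad u\in\{-1,1\},
\end{align*}
and analogously $\widetilde{\Lambda}_{d,y^d,2,v}$ for Bob. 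Since each $\Lambda_{d,i,z}$ is positive semidefinite and the original operators sum to the identity, so do the $\widetilde{\Lambda}$'s, and they therefore constitute a valid two-outcome POVM indexed by the classical observation. By linearity of the Born rule together with the tensor-product structure of the Bell state, the joint distribution of the two coarsened outputs conditioned on $(X^d,Y^d)=(x^d,y^d)$ equals the conditional joint distribution of $(U_d,V_d)$ under the original protocol; consequently, using the coarsened measurements together with the identity as post-processing reproduces exactly the same distribution $P^{(d)}$ on $\mathcal{U}\times \mathcal{V}$, and hence still converges to $Q_{U,V}$ in total variation.

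The resulting protocol is a binary-output EA-NISS scheme in which the measurements are allowed to depend on the classical observations $x^d$ and $y^d$. This is precisely the setting covered by the Corollary following Proposition \ref{th:2}, whose conclusion is $Q_{U,V}\in\mathcal{P}_{CR}(P_{X,Y},\mathcal{U},\mathcal{V})$, completing the argument. The main obstacle I anticipate is the verification in the binarization step: one must confirm that grouping POVM elements according to a classical function of the outcome preserves the correct \emph{joint} statistics across both parties simultaneously, not merely the marginal statistics of each agent. This follows from bilinearity of the Born rule applied to $\widetilde{\Lambda}_{d,x^d,1,u}\otimes \widetilde{\Lambda}_{d,y^d,2,v}$, but deserves to be written out carefully. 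Beyond this, the reduction is a matter of bookkeeping and appeal to the binary-measurement result.
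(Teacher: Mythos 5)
Your proof is correct and follows essentially the same route as the paper: the paper's argument likewise rests on coarsening the non-binary POVM into a binary one (its ``well-known fact'' that a non-binary measurement followed by classical binarization into binary outputs is equivalent to a one-shot binary-output measurement) together with the patching construction from the proof of Proposition \ref{th:2}, which is exactly what the Corollary you invoke encapsulates. The only difference is organizational: the paper first treats the independent case and then re-runs the patching scheme ($f^+_{x^d}$, $g^{\pm}_{y^d}$, $p_{ts}$) explicitly for correlated $P_{X,Y}$, whereas you absorb the post-processing into observation-dependent binary POVMs in a single step and reduce directly to the Corollary, with your explicit positive-semidefiniteness/completeness check supplying the coarsening detail the paper leaves implicit.
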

The proof follows by similar arguments as in the proofs of Propositions \ref{th:1} and \ref{th:2}. We provide an outline in the following. First, we consider the case where the agents do not have access to correlated randomness, i.e., $P_{X,Y}=P_XP_Y$. We use the well-known fact that any concatenation of non-binary quantum measurement followed by local classical processing of the measurement output into binary outputs can be mapped to a one shot binary-output quantum measurement. Thus $\mathcal{P}_{EA}(P_{X}P_Y,\mathcal{U},\mathcal{V})=\mathcal{P}_{EA}(P_{X}P_{Y},\mathcal{U},\mathcal{V},\{-1,1\})$.
We conclude from Proposition \ref{th:1} that $\mathcal{P}_{EA}(P_{X}P_{Y},\mathcal{U},\mathcal{V})=\mathcal{P}_{CR}(P_{X}P_{Y},\mathcal{U},\mathcal{V})$ for all marginal distributions $P_X,P_Y$. Next, for the general scenario, where $X$ and $Y$ are not independent, we follow the steps in the proof of Proposition \ref{th:2}. That is, from the previous arguments, we conclude that for any realization $x^d,y^d$ of the classical inputs, the output in the EA-NISS scenario can be simulated using a classical random bit. So, for any fixed $x^d,y^d\in \{-1,1\}^d$, we can construct classical functions $f_{x^d,y^d}(Z)$ and $g_{x^d,y^d}(Z)$ for the CR-NISS scenario which simulate the output of Alice and Bob in the EA-NISS scenario. Next, we define the collection of functions $f^+_{x^d,y^d}, g^+_{x^d,y^d}, g^-_{x^d,y^d}$, and $p_{ts}$ as in the proof of Proposition \ref{th:2} based on $f_{x^d,y^d}(Z,X^d)$ and $g_{x^d,y^d}(Z,Y^d)$. 
We define the CR-NISS simulation scheme as follows. Alice observes $X^d=x^d$, and computes $f^+_{x^d}(Z)$. She uses independent copies of $X_i, i>d$ to generate U such that $P_{U}(1)= \frac{1+f^+_{x^d}(Z)}{2}$ and
$P_{U}(-1)= \frac{1-f^+_{x^d}(Z)}{2}$. Bob first generates a binary random variable $T$ such that $P_T(1)= p_{ts}$ and $P_T(-1)=1-p_{ts}$, using copies of $Y_i, i>d$ which are non-overlapping with those used by Alice.  Bob then observes $Y^d=y^d$, and uses additional non-overlapping copies of $Y_i,i>d$ to generate a binary random variable $V^+$ such that $P_{V^+}(1)= \frac{1+g^+_{y^d}(Z)}{2}$ and
$P_{V^+}(-1)= \frac{1-g^+_{y^d}(Z)}{2}$, and a binary random variable $V^-$ such that  $P_{V^-}(1)= \frac{1+g^-_{y^d}(Z)}{2}$ and
$P_{V^-}(-1)= \frac{1-g^-_{y^d}(Z)}{2}$. If $T=1$, Bob outputs $V=V^+$ and if $T=-1$, Bob outputs $V=V^-$. As shown in the proof of Proposition \ref{th:2}, this scheme guarantees that the distribution generated by Alice and Bob in the CR-NISS scenario simulates the one generated in the EA-NISS scenario. 

\section{Non-Binary Output NISS Scenarios}
In the previous sections, we have shown that there is no quantum advantage in NISS scenarios simulating binary-output variables. In this section, we demonstrate quantum advantage in general EA-NISS scenarios over CR-NISS scenarios. We further show that the set of distributions generated in CR-NISS has measure zero among the set of distributions generated in EA-NISS.
We begin with a simple example illustrating a specific joint distribution that can be simulated within the EA-NISS framework but is not simulatable under the CR-NISS scenario.  
\begin{Example}
    Consider a NISS scenario in which agents Alice and Bob aim to simulate an output distribution $P_{U,V}$ on finite sets $\mathcal{U}, \mathcal{V} = \{1, 2, 3\}$. Further assume that other than the shared binary symmetric variable $Z$ in the CR-NISS scenario and the shared Bell pair in the EA-NISS scenario,
    they only have access to local randomness, i.e.,  $P_{X,Y} = P_X P_Y$.
    
    \textbf{EA-NISS Scenario:} The measurement operators used by Alice and Bob are defined as follows:
    \begin{align*}
    & {\Lambda}_{1,1}={\Lambda}_{2,1}=\frac{2}{3}
    \begin{bmatrix}
        & 0 & 0 \\
        & 0 & 1
    \end{bmatrix}, \qquad {\Lambda}_{1,2}={\Lambda}_{2,2}=\frac{2}{3}
    \begin{bmatrix}
        & \frac{3}{4} & \frac{\sqrt{3}}{4} \\
        & \frac{\sqrt{3}}{4} & \frac{1}{4}
    \end{bmatrix}, \qquad {\Lambda}_{1,3}={\Lambda}_{2,3}=\frac{2}{3}
    \begin{bmatrix}
        & \frac{3}{4} & -\frac{\sqrt{3}}{4} \\
        & -\frac{\sqrt{3}}{4} & \frac{1}{4}
    \end{bmatrix}. 
\end{align*}
The joint distribution $P_{Z_1,Z_2}$ resulting from their measurement outputs is computed as:
    \begin{align*}
        \mathbb{P}_{Z_1,Z_2}(z_1, z_2) = \langle \Phi^+ | \Lambda_{1,z_1} \otimes \Lambda_{2,z_2} | \Phi^+ \rangle = \frac{1}{2} \text{Vec}(\Lambda_{1,z_1})^\top \text{Vec}(\Lambda_{2,z_2}).
    \end{align*}
    Applying the equation, we get:
\begin{align*}
 &   P_{Z_1,Z_2}(1,1)= \frac{2}{9} ,\quad  P_{Z_1,Z_2}(1,2)= \frac{1}{18} , \quad  P_{Z_1,Z_2}(1,3)= \frac{1}{18}  \\
    &P_{Z_1,Z_2}(2,1)= \frac{1}{18} ,\quad  P_{Z_1,Z_2}(2,2)= \frac{2}{9} , \quad  P_{Z_1,Z_2}(2,3)= \frac{1}{18}  \\
    &P_{Z_1,Z_2}(3,1)= \frac{1}{18} ,\quad  P_{Z_1,Z_2}(3,2)= \frac{1}{18} , \quad  P_{Z_1,Z_2}(3,3)= \frac{2}{9}.
\end{align*}

The agents output $U=Z_1$ and $V=Z_2$. We note the joint distribution matrix of $(U,V)$ has rank three. 

\textbf{CR-NISS Scenario:} We argue that the distribution generated in the above EA-NISS setup cannot be generated in CR-NISS. To see this, note that since Alice and Bob only have access to local randomness, their outputs $U$ and $V$ follow the distribution $P_{UV}(\cdot,\cdot)=\sum_{z\in \{0,1\}}P_Z(z) P_{U|Z}(\cdot|z)P_{V|Z}(\cdot|z)$, where $P_Z$ is $Be(\frac{1}{2})$. Thus, the resulting joint probability distribution of $U$ and $V$ has rank at most equal to two. This complete the proof of quantum advantage in this particular NISS scenario.  
\end{Example}
The following theorem formally shows that the set of distributions simulatable in CR-NISS has measure zero in those generated by EA-NISS. 

\begin{Theorem}
\label{th:4}
    Let $P_{X,Y}$ be a joint distribution defined on $\{-1,1\}^2$ and let $\mathcal{U}, \mathcal{V}$ be non-binary finite alphabets. The set $\mathcal{P}_{CR}(P_{X,Y}, \mathcal{U}, \mathcal{V})$ forms a set of measure zero within the set  $\mathcal{P}_{EA}(P_{X,Y}, \mathcal{U}, \mathcal{V})$.
\end{Theorem}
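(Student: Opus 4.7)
The approach is to show that $\mathcal{P}_{CR}(P_{X,Y},\mathcal{U},\mathcal{V})$ is confined to a real-algebraic subvariety of strictly positive codimension inside the joint-output simplex $\Delta(\mathcal{U}\times\mathcal{V})$, while $\mathcal{P}_{EA}(P_{X,Y},\mathcal{U},\mathcal{V})$ contains a relatively open neighborhood of the rank-three distribution constructed in the Example. Since any proper real-algebraic subvariety of the simplex has Lebesgue measure zero, this would give the claim.

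For the CR-NISS side I would first condition on the classical input realizations $(X^d,Y^d)=(x^d,y^d)$: once these are fixed, the outputs $(U,V)$ are produced by stochastic functions of the single common bit $Z$ alone, so the conditional joint matrix $[P_{U,V\mid x^d,y^d}(u,v)]_{u,v}$ factors as $\tfrac{1}{2}\alpha_{x^d,0}\beta_{y^d,0}^\top + \tfrac{1}{2}\alpha_{x^d,1}\beta_{y^d,1}^\top$ and therefore has matrix-rank at most two. Averaging this expression against $P_{X,Y}^{\otimes d}(x^d,y^d)$ and taking $d\to\infty$ places $Q_{U,V}$ inside the image of a bilinear map whose admissible outputs are further cut out by the Fourier-coefficient identities provided by Proposition \ref{Prop:2}. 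A parameter count shows the resulting set lies in a finite union of real-algebraic varieties each of codimension at least one in $\Delta(\mathcal{U}\times\mathcal{V})$ whenever $\min(|\mathcal{U}|,|\mathcal{V}|)\geq 3$, and closedness of algebraic sets in the compact simplex lets the containment pass to the asymptotic limit that defines $\mathcal{P}_{CR}$.

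For the EA-NISS side I would perturb the POVMs $\{\Lambda_{1,z_1}\}$ and $\{\Lambda_{2,z_2}\}$ used in the Example. Because the map $(\Lambda_{1,\cdot},\Lambda_{2,\cdot})\mapsto P_{Z_1,Z_2}(z_1,z_2)=\tfrac{1}{2}\,\text{Vec}(\Lambda_{1,z_1})^\top \text{Vec}(\Lambda_{2,z_2})$ is smooth and the Example's distribution has full rank three, a direct Jacobian computation at the Example's operators shows surjectivity onto the tangent space of $\Delta(\mathcal{U}\times\mathcal{V})$ at that point. The implicit function theorem then exhibits a relatively open neighborhood of the Example's distribution inside $\mathcal{P}_{EA}$, which certainly has positive Lebesgue measure. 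Combined with the previous paragraph, $\mathcal{P}_{CR}$ is contained in a measure-zero set whereas $\mathcal{P}_{EA}$ contains a set of positive measure, establishing the theorem.

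The main obstacle I anticipate is the codimension bound in the CR-NISS step. Pointwise, each conditional matrix has rank at most two, but summing rank-two matrices with the correlated weights $P_{X,Y}^{\otimes d}$ can a priori produce full-rank averages, so the rank-two bound alone is not enough once $P_{X,Y}$ is genuinely correlated. The fix is to track the finite-dimensional parametrization of Alice's and Bob's kernels (which depend on $z\in\{0,1\}$ and on the binary inputs) and to combine it with the Fourier-based necessary conditions of Proposition \ref{Prop:2}, so that the variety in which $Q_{U,V}$ lives has codimension independent of $d$ even after averaging.
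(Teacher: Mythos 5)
Your overall route coincides with the paper's: Proposition \ref{Prop:2} supplies polynomial \emph{equality} constraints that every CR-feasible distribution must satisfy, these cut out a closed proper real-algebraic subvariety of the simplex once $\min(|\mathcal{U}|,|\mathcal{V}|)\geq 3$ (the Example's trine-measurement distribution violates them: there $Q_{U,V}(i,i)-Q_U(i)Q_V(i)=\tfrac19$ while $Q_{U,V}(i,j)-Q_U(i)Q_V(j)=-\tfrac{1}{18}$, so $\tfrac{1}{81}\neq\tfrac{1}{324}$), and hence $\mathcal{P}_{CR}$ is Lebesgue-null while $\mathcal{P}_{EA}$ is not. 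Where you go beyond the paper is the EA side: the paper merely asserts the theorem ``follows in a straightforward manner,'' whereas you make explicit the fact --- genuinely needed for a ``measure zero within $\mathcal{P}_{EA}$'' statement --- that $\mathcal{P}_{EA}$ contains a set of positive measure, via a Jacobian/implicit-function argument at the Example's POVMs. That addition is welcome, with one caveat: the trine elements are rank-one, i.e., on the boundary of the PSD cone, so surjectivity of the differential must be verified along directions that remain inside the POVM manifold (PSD plus completeness); alternatively, mixing the trine outputs with small classical noise generated from $X^d,Y^d$ (permitted in the EA model) sweeps out a full-dimensional neighborhood more cheaply.

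The genuine gap is the correlated-input step, and your proposed fix cannot work as stated. Proposition \ref{Prop:2} is proved only for $P_{X,Y}=P_XP_Y$: its derivation kills all cross terms because $\mathbb{E}\bigl(\prod_{i\in\mathcal{S}}(X_i-\mu_X)(Y_i-\mu_Y)\bigr)=0$ for nonempty $\mathcal{S}$ under independence, whereas for correlated inputs those terms contribute $\rho^{|\mathcal{S}|}$ factors (Equation \eqref{eq:1}) and the identity \eqref{eq:constraint_ternary} no longer follows. Moreover, no amount of parameter tracking can restore a $d$-independent codimension bound in general: take $X=Y$ with nondegenerate $P_X$. Then $(X^d,Y^d)$ is unbounded common randomness, both parties can compute a pair $(U,V)$ as a common function of $(Z,X^d)$ and output their respective coordinates, and $\mathcal{P}_{CR}(P_{X,Y},\mathcal{U},\mathcal{V})$ is the entire simplex --- contained in no proper variety and certainly not null within $\mathcal{P}_{EA}$. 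So Theorem \ref{th:4} in the generality stated needs an extra hypothesis (independence as in Proposition \ref{Prop:2}, or maximal correlation strictly below $1$ together with a correspondingly generalized identity), and your plan inherits exactly this limitation; in fairness, the paper's own proof sketch does too, since it also invokes only Proposition \ref{Prop:2}. Finally, note that your pointwise ``conditional rank $\leq 2$'' observation contributes nothing after averaging (as you yourself concede), so the entire CR-side burden of your argument rests on the independence-only Fourier identity --- which is precisely where it currently breaks.
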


The proof can potentially be provided directly using the rank of the joint distribution matrix as in the previous example. Alternatively, the Fourier transform machinery developed in the previous sections to derive a necessary condition on the set of distributions generated in CR-NISS as in the following proposition. The proof of the theorem follows in a straightforward manner.
\begin{Proposition}
\label{Prop:2}
    Let $P_{X},P_Y$ be two probability distributions on $\{-1,1\}^2$ and let $\mathcal{U}$ and $\mathcal{V}$ be finite sets. Then, for any $Q_{U,V}\in \mathcal{P}_{CR}(P_{X}P_Y,\mathcal{U},\mathcal{V})$, the following holds for all $(i,j)\in \mathcal{U}\times \mathcal{V}$:
        \begin{align} 
&        (Q_{UV}(i,i)-Q_U(i)Q_V(i))(Q_{UV}(j,j)-Q_U(j)Q_V(j)=\label{eq:constraint_ternary}
        (Q_{UV}(i,j)\!-\!Q_U(i)Q_V(j))(Q_{UV}(j,i)-Q_U(j)Q_V(i)).
    \end{align}
\end{Proposition}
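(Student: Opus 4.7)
The plan is to establish the stronger statement that for any $Q_{U,V}\in \mathcal{P}_{CR}(P_XP_Y,\mathcal{U},\mathcal{V})$, the ``covariance'' matrix $M(u,v)\triangleq Q_{UV}(u,v)-Q_U(u)Q_V(v)$, viewed as a $|\mathcal{U}|\times |\mathcal{V}|$ array, has rank at most one. The identity in the proposition is then the vanishing of the particular $2\times 2$ minor indexed by $(i,i)$, $(i,j)$, $(j,i)$, $(j,j)$, and follows immediately.

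First I would fix $d$ and exploit the assumption $P_{X,Y}=P_XP_Y$. Modeling any stochastic $f_d$ and $g_d$ via independent private randomness $R_A$ and $R_B$, I write $U_d=f_d(Z,X^d,R_A)$ and $V_d=g_d(Z,Y^d,R_B)$, where $Z$, $(X^d,R_A)$, and $(Y^d,R_B)$ are mutually independent. Conditioning on $Z$ yields $U_d \indep V_d \mid Z$, so
\begin{align*}
P^{(d)}_{UV}(u,v)
= \tfrac{1}{2}P^{(d)}_{U|Z}(u|1)P^{(d)}_{V|Z}(v|1) + \tfrac{1}{2}P^{(d)}_{U|Z}(u|-1)P^{(d)}_{V|Z}(v|-1).
\end{align*}
Expanding $P^{(d)}_U(u)P^{(d)}_V(v)$ with $P^{(d)}_U(u)=\tfrac{1}{2}[P^{(d)}_{U|Z}(u|1)+P^{(d)}_{U|Z}(u|-1)]$ and the analogous expression for $P^{(d)}_V(v)$, a one-line calculation gives
\begin{align*}
M^{(d)}(u,v) \triangleq P^{(d)}_{UV}(u,v)-P^{(d)}_U(u)P^{(d)}_V(v) = \tfrac{1}{4}\,\alpha^{(d)}_u\,\beta^{(d)}_v,
\end{align*}
where $\alpha^{(d)}_u = P^{(d)}_{U|Z}(u|1)-P^{(d)}_{U|Z}(u|-1)$ and $\beta^{(d)}_v = P^{(d)}_{V|Z}(v|1)-P^{(d)}_{V|Z}(v|-1)$. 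Thus $M^{(d)}$ is a rank-one outer product for every $d$, and consequently
\begin{align*}
M^{(d)}(i,i)\,M^{(d)}(j,j) - M^{(d)}(i,j)\,M^{(d)}(j,i) = 0
\end{align*}
for all $i,j$.

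Finally, I pass to the limit. Since $\mathcal{U}$ and $\mathcal{V}$ are finite, convergence $d_{TV}(P^{(d)},Q_{U,V})\to 0$ is equivalent to entrywise convergence, so $M^{(d)}(u,v)\to M(u,v)$ pointwise. As the $2\times 2$ minors are continuous (indeed polynomial) in these entries, they vanish in the limit, which is exactly the claimed identity for $Q_{U,V}$. The only non-routine step is identifying the rank-one factorization of $M^{(d)}$; once this is in hand, closure under limits and the reduction to a $2\times 2$ minor are bookkeeping. I note in passing that the argument uses nothing about $P_X$ and $P_Y$ beyond their independence, and that Theorem \ref{th:4} then follows because the zero set of the polynomial constraint in \eqref{eq:constraint_ternary} is a proper algebraic subvariety, hence of Lebesgue measure zero, of the simplex of joint distributions on $\mathcal{U}\times\mathcal{V}$.
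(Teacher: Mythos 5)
Your proof is correct, and it reaches the paper's key identity by a genuinely more elementary route. The paper fixes $d$, expands the $\pm1$-valued indicators $f_{d,u}$ and $g_{d,v}$ in the Boolean Fourier bases of $\mathcal{L}_{Z,X^d}$ and $\mathcal{L}_{Z,Y^d}$, and uses independence of $X^d$ and $Y^d$ to annihilate all terms except the empty-set and $Z$-coefficients, arriving at $f_{d,u,1,\phi}\,g_{d,v,1,\phi}=4\bigl(Q_{UV}(u,v)-Q_U(u)Q_V(v)\bigr)$; the proposition then follows by multiplying the four instances indexed by $(i,i),(j,j)$ and $(i,j),(j,i)$. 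Your conditioning argument produces exactly the same factorization with no Fourier machinery: under the paper's $\pm1$ indicator convention, $f_{d,u,1,\phi}=\mathbb{E}\bigl[Z\,f_{d,u}(Z,X^d)\bigr]=P^{(d)}_{U|Z}(u|1)-P^{(d)}_{U|Z}(u|-1)=\alpha^{(d)}_u$, and likewise $g_{d,v,1,\phi}=\beta^{(d)}_v$, so your outer-product identity $M^{(d)}=\tfrac14\,\alpha^{(d)}(\beta^{(d)})^{\top}$ is the paper's identity in matrix form. Your route buys three things: first, it makes the structural content explicit --- the covariance array has rank at most one, so \emph{all} $2\times 2$ minors vanish, which is the clean statement underlying both \eqref{eq:constraint_ternary} and the rank argument in the paper's Example; second, it is marginally more general, since the Fourier basis needs $\sigma_X,\sigma_Y\neq 0$ while conditioning on $Z$ (with conditional independence of $U_d$ and $V_d$ given $Z$, valid because $Z$, $(X^d,R_A)$, $(Y^d,R_B)$ are mutually independent) does not; third, it closes a small gap the paper elides: at finite $d$ the simulated law is $P^{(d)}$, not $Q_{U,V}$, yet the paper writes $f_{d,u,\phi}=\mathbb{E}(\mathbbm{1}(U=u))=2Q_U(u)-1$ as though simulation were exact, whereas you correctly pass the vanishing minors through the limit via entrywise convergence and continuity. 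One caution on your closing aside: for Theorem \ref{th:4} the measure-zero conclusion needs, in addition to your observation that \eqref{eq:constraint_ternary} carves out a proper algebraic subvariety, the fact that $\mathcal{P}_{EA}$ is not itself contained in (a set of measure zero around) that variety --- that is the role played by the rank-three construction in the paper's Example; the polynomial-zero-set argument alone only constrains $\mathcal{P}_{CR}$.
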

\begin{proof}
    Let $(f_d,g_d)_{d\in \mathbb{N}}$ be the associated functions of $Q_{U,V}$, and fix $d\in \mathbb{N}$. Let us define 
    \begin{align*}
        &f_{d,u}(Z,X^d)\triangleq \mathbbm{1}(f_{d}(Z,X^d)=u), \quad u\in \mathcal{U}\\
        &g_{d,v}(Z,Y^d)\triangleq \mathbbm{1}(g_{d}(Z,Y^d)=v),\quad  v\in \mathcal{V}.
    \end{align*}
    Since $f_{d,u},g_{d,v}$ are Boolean functions, as explained in the proof of Proposition \ref{th:1}, we can use Boolean Fourier expansion to decompose them into parity functions:
    \begin{align*}
        &f_{d,u}(Z,X^d)=f_{d,u,0,\phi}+f_{d,u,1,\phi}Z +\sum_{\mathcal{S}\subseteq [d], \mathcal{S}\neq \phi} (f_{d,u,0,\mathcal{S}}+Zf_{d,u,1,\mathcal{S}})\prod_{i\in \mathcal{S}}\frac{X_i-\mu_X}{\sigma_X}, \quad u \in \mathcal{U},
        \\& g_{d,v}(Z,Y^d)=g_{d,v,0,\phi}+g_{d,v,1,\phi}Z+\sum_{\mathcal{S}\subseteq [d],\mathcal{S}\neq \phi} (g_{d,v,0,\mathcal{S}}+Zg_{d,v,1,\mathcal{S}})\prod_{i\in \mathcal{S}}\frac{Y_i-\mu_Y}{\sigma_Y},\quad v \in \mathcal{V}.
    \end{align*}
    Consequently, using the independence of $X_i$ and $Y_i$: 
    \begin{align*}
        \mathbb{E}(f_{d,u}(Z,X^d)g_{d,v}(Z,Y^d))=  f_{d,u,\phi}g_{d,v,\phi}+f_{d,u,1,\phi}g_{d,v,1,\phi}. 
    \end{align*}
    On the other hand:
    \begin{align*}
        &f_{d,u,\phi}= \mathbb{E}(\mathbbm{1}(U=u))= 2Q_U(u)-1,\\
        &g_{d,v,\phi}= \mathbb{E}(\mathbbm{1}(V=v))= 2Q_V(v)-1,\\
        & \mathbb{E}(f_{d,u}(Z,X^d)g_{d,v}(Z,Y^d))= \mathbb{E}(\mathbbm{1}(U=u,V=v))
      = 4Q_{U,V}(u,v)-2Q_U(u)-2Q_V(v)+1. 
    \end{align*}
    As a result, 
    \begin{align*}
        f_{d,u,1,\phi}g_{d,v,1,\phi}= 4(Q_{U,V}(u,v)-Q_U(u)Q_V(v)). 
    \end{align*}
    For any given $(i,j)\in \mathcal{U}\times \mathcal{V}$, we have:
    \begin{align*}
        &f_{d,i,1,\phi}g_{d,i,1,\phi}= 4(Q_{U,V}(i,i)-Q_U(i)Q_V(i))\\
        &f_{d,j,1,\phi}g_{d,j,1,\phi}= 4(Q_{U,V}(j,j)-Q_U(j)Q_V(j))\\
        &\Rightarrow f_{d,i,1,\phi}g_{d,i,1,\phi}f_{d,j,1,\phi}g_{d,j,1,\phi}=
     16 (Q_{U,V}(i,i)-Q_U(i)Q_V(i))(Q_{U,V}(j,j)-Q_U(j)Q_V(j)).
    \end{align*}
    Similarly,
        \begin{align*}
        &f_{d,i,1,\phi}g_{d,j,1,\phi}= 4(Q_{U,V}(i,j)-Q_U(i)Q_V(j))\\
        &f_{d,j,1,\phi}g_{d,i,1,\phi}= 4(Q_{U,V}(j,i)-Q_U(j)Q_V(i))\\
        &\Rightarrow f_{d,i,1,\phi}g_{d,i,1,\phi}f_{d,j,1,\phi}g_{d,j,1,\phi}=
         16 (Q_{U,V}(i,j)-Q_U(i)Q_V(j))(Q_{U,V}(j,i)-Q_U(j)Q_V(i)).
    \end{align*}
    Combining the two results, we get:
    \begin{align*}
         (Q_{U,V}(i,i)-Q_U(i)Q_V(i))(Q_{U,V}(j,j)-Q_U(j)Q_V(j)=(Q_{U,V}(i,j)-Q_U(i)Q_V(j))(Q_{U,V}(j,i)-Q_U(j)Q_V(i)).
    \end{align*}
    This completes the proof. 
\end{proof}

\section{Conclusion}
Two variations of the standard NISS scenario, namely EA-NISS and CR-NISS, were considered.  It was shown that for binary-output NISS scenarios, the set of feasible distributions for EA-NISS and CR-NISS are equal with each other. Hence, there is no quantum advantage in these EA-NISS scenarios. To this end, first a
a necessary and sufficient condition for feasibility in the binary-output CR-NISS was provided.
   It was shown that any distribution generated in an EA-NISS scenario satisfies this condition  and is hence feasible for the corresponding CR-NISS. 
For non-binary output NISS scenarios, it was shown that the set of distributions that are feasible in CR-NISS forms a set of measure zero within the set of feasible distribution in EA-NISS scenario.
%%%%%%
%% To balance the columns at the last page of the paper use this
%% command:
%%
%\enlargethispage{-1.2cm} 
%%
%% If the balancing should occur in the middle of the references, use
%% the following trigger:
%%
\newpage
\IEEEtriggeratref{41}
\bibliographystyle{IEEEtran}
\bibliography{References_Quantum}

% Generated by IEEEtran.bst, version: 1.14 (2015/08/26)
\begin{thebibliography}{10}
\providecommand{\url}[1]{#1}
\csname url@samestyle\endcsname
\providecommand{\newblock}{\relax}
\providecommand{\bibinfo}[2]{#2}
\providecommand{\BIBentrySTDinterwordspacing}{\spaceskip=0pt\relax}
\providecommand{\BIBentryALTinterwordstretchfactor}{4}
\providecommand{\BIBentryALTinterwordspacing}{\spaceskip=\fontdimen2\font plus
\BIBentryALTinterwordstretchfactor\fontdimen3\font minus \fontdimen4\font\relax}
\providecommand{\BIBforeignlanguage}[2]{{%
\expandafter\ifx\csname l@#1\endcsname\relax
\typeout{** WARNING: IEEEtran.bst: No hyphenation pattern has been}%
\typeout{** loaded for the language `#1'. Using the pattern for}%
\typeout{** the default language instead.}%
\else
\language=\csname l@#1\endcsname
\fi
#2}}
\providecommand{\BIBdecl}{\relax}
\BIBdecl

\bibitem{ghazi2016decidability}
B.~Ghazi, P.~Kamath, and M.~Sudan, ``Decidability of non-interactive simulation of joint distributions,'' in \emph{2016 IEEE 57th Annual Symposium on Foundations of Computer Science (FOCS)}.\hskip 1em plus 0.5em minus 0.4em\relax IEEE, 2016, pp. 545--554.

\bibitem{ghazi2017dimension}
B.~Ghazi, P.~Kamath, and P.~Raghavendra, ``Dimension reduction for polynomials over gaussian space and applications,'' \emph{arXiv preprint arXiv:1708.03808}, 2017.

\bibitem{de2018non}
A.~De, E.~Mossel, and J.~Neeman, ``Non interactive simulation of correlated distributions is decidable,'' in \emph{Proceedings of the Twenty-Ninth Annual ACM-SIAM Symposium on Discrete Algorithms}.\hskip 1em plus 0.5em minus 0.4em\relax SIAM, 2018, pp. 2728--2746.

\bibitem{khorasgani2021decidability}
H.~A. Khorasgani, H.~K. Maji, and H.~H. Nguyen, ``Decidability of secure non-interactive simulation of doubly symmetric binary source,'' \emph{Cryptology ePrint Archive}, 2021.

\bibitem{bhushan2023secure}
K.~Bhushan, A.~K. Misra, V.~Narayanan, and M.~Prabhakaran, ``Secure non-interactive reducibility is decidable,'' in \emph{Theory of Cryptography: 20th International Conference, TCC 2022, Chicago, IL, USA, November 7--10, 2022, Proceedings, Part II}.\hskip 1em plus 0.5em minus 0.4em\relax Springer, 2023, pp. 408--437.

\bibitem{yu2022common}
L.~Yu, V.~Y. Tan \emph{et~al.}, ``Common information, noise stability, and their extensions,'' \emph{Foundations and Trends{\textregistered} in Communications and Information Theory}, vol.~19, no.~2, pp. 107--389, 2022.

\bibitem{sudan2019communication}
M.~Sudan, H.~Tyagi, and S.~Watanabe, ``Communication for generating correlation: A unifying survey,'' \emph{IEEE Transactions on Information Theory}, vol.~66, no.~1, pp. 5--37, 2019.

\bibitem{Shiraniboolean2017}
F.~S. Chaharsooghi and S.~S. Pradhan, ``On the correlation between boolean functions of sequences of random variables,'' in \emph{2017 IEEE International Symposium on Information Theory (ISIT)}, 2017, pp. 1301--1305.

\bibitem{shirani2019sub}
F.~Shirani and S.~S. Pradhan, ``On the sub-optimality of single-letter coding over networks,'' \emph{IEEE Transactions on Information Theory}, vol.~65, no.~10, pp. 6115--6135, 2019.

\bibitem{shirani2023non}
F.~Shirani and M.~Heidari, ``On non-interactive source simulation via fourier transform,'' in \emph{2023 IEEE Information Theory Workshop (ITW)}.\hskip 1em plus 0.5em minus 0.4em\relax IEEE, 2023, pp. 371--376.

\bibitem{cachin2000random}
C.~Cachin and V.~Shoup, ``Random oracles in constantinople: Practical asynchronous byzantine agreement using,'' in \emph{Proceedings of the 19th ACM Symposium on Principles of Distributed Computing, no}, 2000, pp. 1--26.

\bibitem{rabin1983randomized}
M.~O. Rabin, ``Randomized byzantine generals,'' in \emph{24th annual symposium on foundations of computer science (sfcs 1983)}.\hskip 1em plus 0.5em minus 0.4em\relax IEEE, 1983, pp. 403--409.

\bibitem{bentov2016snow}
I.~Bentov, R.~Pass, and E.~Shi, ``Snow white: Provably secure proofs of stake.'' \emph{IACR Cryptol. ePrint Arch.}, vol. 2016, no. 919, 2016.

\bibitem{gilad2017algorand}
Y.~Gilad, R.~Hemo, S.~Micali, G.~Vlachos, and N.~Zeldovich, ``Algorand: Scaling byzantine agreements for cryptocurrencies,'' in \emph{Proceedings of the 26th symposium on operating systems principles}, 2017, pp. 51--68.

\bibitem{das2018yoda}
S.~Das, V.~J. Ribeiro, and A.~Anand, ``Yoda: Enabling computationally intensive contracts on blockchains with byzantine and selfish nodes,'' \emph{arXiv preprint arXiv:1811.03265}, 2018.

\bibitem{goel2003herbivore}
S.~Goel, M.~Robson, M.~Polte, and E.~Sirer, ``Herbivore: A scalable and efficient protocol for anonymous communication,'' Cornell University, Tech. Rep., 2003.

\bibitem{dingledine2004tor}
R.~Dingledine, N.~Mathewson, and P.~Syverson, ``Tor: The second-generation onion router,'' Naval Research Lab Washington DC, Tech. Rep., 2004.

\bibitem{bonneau2015bitcoin}
J.~Bonneau, J.~Clark, and S.~Goldfeder, ``On bitcoin as a public randomness source,'' \emph{Cryptology ePrint Archive}, 2015.

\bibitem{ahlswede1993common}
R.~Ahlswede and I.~Csisz{\'a}r, ``Common randomness in information theory and cryptography. i. secret sharing,'' \emph{IEEE Transactions on Information Theory}, vol.~39, no.~4, pp. 1121--1132, 1993.

\bibitem{maurer1993secret}
U.~M. Maurer, ``Secret key agreement by public discussion from common information,'' \emph{IEEE transactions on information theory}, vol.~39, no.~3, pp. 733--742, 1993.

\bibitem{baigneres2015trap}
T.~Baigneres, C.~Delerabl{\'e}e, M.~Finiasz, L.~Goubin, T.~Lepoint, and M.~Rivain, ``Trap me if you can--million dollar curve,'' \emph{Cryptology ePrint Archive}, 2015.

\bibitem{kamath2016non}
S.~Kamath and V.~Anantharam, ``On non-interactive simulation of joint distributions,'' \emph{IEEE Transactions on Information Theory}, vol.~62, no.~6, pp. 3419--3435, 2016.

\bibitem{li2020boolean}
J.~Li and M.~M{\'e}dard, ``Boolean functions: noise stability, non-interactive correlation distillation, and mutual information,'' \emph{IEEE Transactions on Information Theory}, vol.~67, no.~2, pp. 778--789, 2020.

\bibitem{witsenhausen1975sequences}
H.~S. Witsenhausen, ``On sequences of pairs of dependent random variables,'' \emph{SIAM Journal on Applied Mathematics}, vol.~28, no.~1, pp. 100--113, 1975.

\bibitem{Plenio2005}
M.~B. Plenio and S.~Virmani, ``An introduction to entanglement measures,'' 2005.

\bibitem{Chitambar2018}
E.~Chitambar and G.~Gour, ``Quantum resource theories,'' 2018.

\bibitem{huang2021information}
H.-Y. Huang, R.~Kueng, and J.~Preskill, ``Information-theoretic bounds on quantum advantage in machine learning,'' \emph{Physical Review Letters}, vol. 126, no.~19, p. 190505, 2021.

\bibitem{huang2021power}
H.-Y. Huang, M.~Broughton, M.~Mohseni, R.~Babbush, S.~Boixo, H.~Neven, and J.~R. McClean, ``Power of data in quantum machine learning,'' \emph{Nature communications}, vol.~12, no.~1, p. 2631, 2021.

\bibitem{bennett2002entanglement}
C.~H. Bennett, P.~W. Shor, J.~A. Smolin, and A.~V. Thapliyal, ``Entanglement-assisted capacity of a quantum channel and the reverse shannon theorem,'' \emph{IEEE transactions on Information Theory}, vol.~48, no.~10, pp. 2637--2655, 2002.

\bibitem{hsieh2010trading}
M.-H. Hsieh and M.~M. Wilde, ``Trading classical communication, quantum communication, and entanglement in quantum shannon theory,'' \emph{IEEE Transactions on Information Theory}, vol.~56, no.~9, pp. 4705--4730, 2010.

\bibitem{bennett1999entanglement}
C.~H. Bennett, P.~W. Shor, J.~A. Smolin, and A.~V. Thapliyal, ``Entanglement-assisted classical capacity of noisy quantum channels,'' \emph{Physical Review Letters}, vol.~83, no.~15, p. 3081, 1999.

\bibitem{leditzky2020playing}
F.~Leditzky, M.~A. Alhejji, J.~Levin, and G.~Smith, ``Playing games with multiple access channels,'' \emph{Nature communications}, vol.~11, no.~1, p. 1497, 2020.

\bibitem{seshadri2023separation}
A.~Seshadri, F.~Leditzky, V.~Siddhu, and G.~Smith, ``On the separation of correlation-assisted sum capacities of multiple access channels,'' \emph{IEEE Transactions on Information Theory}, 2023.

\bibitem{pereg2023multiple}
U.~Pereg, C.~Deppe, and H.~Boche, ``The multiple-access channel with entangled transmitters,'' \emph{arXiv preprint arXiv:2303.10456}, 2023.

\bibitem{devetak2008resource}
I.~Devetak, A.~W. Harrow, and A.~J. Winter, ``A resource framework for quantum shannon theory,'' \emph{IEEE Transactions on Information Theory}, vol.~54, no.~10, pp. 4587--4618, 2008.

\bibitem{nielsen1999conditions}
M.~A. Nielsen, ``Conditions for a class of entanglement transformations,'' \emph{Physical Review Letters}, vol.~83, no.~2, p. 436, 1999.

\bibitem{o2014analysis}
R.~O'Donnell, \emph{Analysis of boolean functions}.\hskip 1em plus 0.5em minus 0.4em\relax Cambridge University Press, 2014.

\bibitem{Wolf2008}
R.~d. Wolf, \emph{A Brief Introduction to Fourier Analysis on the Boolean Cube}, ser. Graduate Surveys.\hskip 1em plus 0.5em minus 0.4em\relax Theory of Computing Library, 2008, no.~1.

\bibitem{VonNeumann1951}
J.~Von~Neumann, ``Various techniques used in connection with random digits,'' \emph{Appl. Math Ser}, vol.~12, no. 36-38, p.~3, 1951.

\end{thebibliography}
\newpage

\begin{appendices}
\section{Proof of Proposition \ref{th:1}}
\label{App:th:1}
Let us fix a blocklength $d\in \mathbb{N}$, simulating functions $f_d,g_d$, and positive operator valued measuremtns (POVM) $\mathcal{M}_{d,i}=\{\Lambda_{d,i,z}, z\in \{-1,1\}\}, i\in \{1,2\}$, where for each $i\in \{1,2,\}$, $\Lambda_{d,i,z}\in \mathbb{R}^{2\times 2},z\in \{-1,1\}$ are self-adjoint positive semi-definite measurement operators satisfying the completeness relation, i.e., $0\leq \Lambda_{d,i,z}\leq \mathbb{I}_{2\times 2}$, and $\Lambda_{d,i,1}+\Lambda_{d,i,2}=\mathbb{I}_{2\times 2}$, where is the $2\times 2$ identity matrix. Let $Z_1,Z_2$ denote the classical measurement outcomes of Alice and Bob, respectively.
Note that to prove the theorem, it suffices to show that $P_{Z_1,Z_2}\in \mathcal{P}_{CR-NISS}(P_XP_Y,\mathcal{U},\mathcal{V})$, since any post-processing of $Z_1,Z_2$ in the classical domain in the EA-NISS setting can be performed in the CR-NISS setting as well, hence any feasible distribution in EA-NISS is feasible in CR-NISS as long as the underlying $P_{Z_1,Z_2}$ acquired in the EA-NISS process is feasible in the CR-NISS setting. Consequently, it suffices to prove $P_{Z_1,Z_2}$ satisfies Equation \eqref{eq: classical_bound_lemma}. We provide the proof for $P_{Z_1}(1),P_{Z_2}(1)\leq \frac{1}{2}$. The proof can then be extended to general $P_{Z_1,Z_2}$ by flipping the order of the measurement operators if necessary to ensure $P_{Z_1}(1),P_{Z_2}(1)\leq \frac{1}{2}$ and applying the original proof arguments.
To prove the result the case when  $P_{Z_1}(1),P_{Z_2}(1)\leq \frac{1}{2}$, not that we have:
\begin{align*}
    \mathbb{P}_{Z_1,Z_2}(z_1,z_2)=\langle\Phi^+ |  \Lambda_{d,1,z_1} \otimes  \Lambda_{d,2,z_2} | \Phi^+ \rangle =  \frac{1}{2}Vec( \Lambda_{d,1,z_1})^T  Vec(\Lambda_{d,2,z_2}),  
\end{align*}
where $Vec(
\begin{bmatrix}
    a&b\\
    c&d
\end{bmatrix}= [a,b,c,d]
), a,b,c,d\in \mathbb{C}$. Consequently,
\begin{align*}
&    a\triangleq P_{Z_1}(1)= \frac{1}{2}Vec( \Lambda_{d,1,1})^TVec(\Lambda_{d,2,1})+ \frac{1}{2}Vec( \Lambda_{d,1,1})^T Vec(\Lambda_{d,2,-1})
\\&= \frac{1}{2}Vec( \Lambda_{d,1,1})^T Vec(\Lambda_{d,2,-1}+\Lambda_{d,2,1})=  \frac{1}{2}Vec( \Lambda_{d,1,1})^T Vec(\mathbb{I}_{2\times 2}),
\end{align*}
where we have used the completeness relation to conclude that $\Lambda_{d,2,-1}+\Lambda_{d,2,1}=\mathbb{I}_{2\times 2}$. Similarly, 
\begin{align*}
    b\triangleq P_{Z_2}(1)= \frac{1}{2}Vec( \mathbb{I}_{2\times 2})^T Vec(\Lambda_{d,2,1}).
\end{align*}
On the other hand:
\begin{align}
   & P(Z_1=Z_2)= P_{Z_1,Z_2}(-1,-1)+P_{Z_1,Z_2}(1,1)= 1-P_{Z_1}(1)-P_{Z_2}(1)+ 2P_{Z1,Z_2}(1,1)\nonumber
    \\&= 1-a-b+Vec(\Lambda_{d,1,1})^T Vec(\Lambda_{d,2,1}).\label{eq:th1:1}
\end{align}
Furthermore, since $a,b\leq \frac{1}{2}$ by assumption, we have $\beta_{a,b}= ab$. So, it suffices to show that:
\begin{align*}
    | P(Z_1=Z_2)- \zeta_{a,b}|\leq 2ab.
\end{align*}
Replacing $P(Z_1=Z_2)$ using Equation \eqref{eq:th1:1}, it suffices to prove:
\begin{align*}
    |Vec(\Lambda_{d,1,1})^T Vec(\Lambda_{d,2,1})-2ab|\leq 2ab.
\end{align*}
Let us take
\begin{align*}
    \Lambda_{d,1,1}=
    \begin{bmatrix}
        e&f\\f^*&g
    \end{bmatrix},\quad \Lambda_{d,2,1}= \begin{bmatrix}
        h&n\\n^*&q
    \end{bmatrix},
\end{align*}
where $e,g,h,q\in \mathbb{R}$ and $n,f\in \mathbb{C}$. We have:
\begin{align*}
   & ab= \frac{1}{4}Vec(\Lambda_{d,1,1})^T Vec( \mathbb{I}_{2\times 2})\times Vec( \mathbb{I}_{2\times 2})^T Vec(\Lambda_{d,2,1})
    =\frac{(e+g)(h+q)}{4}
    \\& Vec(\Lambda_{d,1,1})^T Vec(\Lambda_{d,2,1})= eh+fn+f^*n^*+gq.
\end{align*}
So, it suffices to show that:
\begin{align*}
   0\leq  eh+fn+f^*n^*+gq\leq (e+g)(h+q),
\end{align*}
equivalently:
\begin{align*}
&    fn+f^*n^*\leq eq+gh,
\qquad  0\leq eh+fn+f^*n^*+gq.
\end{align*}
The positive-semidefinite property of $\Lambda_{d,1,1}$ and $\Lambda_{d,2,1}$ implies that $|f|^2\leq eg$ and $|n|^2\leq qh$, consequently, $fn+f^*n^*\leq 2|f||n|\leq 2\sqrt{egqh}$. So, it suffices to show that:
\begin{align*}
&    2\sqrt{egqh}\leq eq+gh,
\qquad 2\sqrt{egqh}\leq eh+gq
\end{align*}
which is straightforward to verify by noting that $eq+gh$ and $eh+gq$ are non-negative and taking the square of both sides.
\qed
\section{Proof of Proposition \ref{Prop:1}}
\label{App:prop:1}
Let us define
\begin{align*}
    &\mathcal{P}'=\bigcup_{a,b \in [0,1]}\Big\{ Q_{U,V} \mid Q_U(1)=a, Q_V(1)=b, \nonumber \\
        & |Q_{U,V}(-1,-1)+Q_{U,V}(1,1)- \zeta_{a,b}| \leq  2\beta_{ab}\Big\}.
\end{align*}
We show that $\mathcal{P}_{CR}(P_XP_Y,\mathcal{U},\mathcal{V})=\mathcal{P}'$ by showing that each set is a subset of the other.  \\\textbf{Proof of $\mathcal{P}_{CR}(P_XP_Y,\mathcal{U},\mathcal{V})\subseteq \mathcal{P}'$}:
\\ Let $Q_{U,V}\in \mathcal{P}_{CR}(P_XP_Y,\mathcal{U},\mathcal{V})$ be a feasible distribution for $P_XP_Y$, and let $(f_d,g_d)_{d\in \mathbb{N}}$ be its associated sequence of functions. Let us fix $d\in \mathbb{N}$ and denote $U_d= f(Z,X^d)$ and $V_d=g(Z,Y^d)$. As discussed in Section \ref{sec:BFE}, the Fourier parity functions for $\mathcal{L}_{Z,X^d}$ are given by
\begin{align*}
  &  \phi_{1,\mathcal{S}}(Z,X^d)\triangleq Z\prod_{i\in \mathcal{S}}\frac{X_i-\mu_X}{\sigma_X},\quad
  \\&\phi_{0,\mathcal{S}}(Z,X^d)\triangleq \prod_{i\in \mathcal{S}}\frac{X_i-\mu_X}{\sigma_X},
\end{align*}
where $\mathcal{S}\subseteq [d]$. The parity functions form an orthonormal basis for $\mathcal{L}_{Z,X^d}$. Similarly, the Fourier parity functions for $\mathcal{L}_{Z,Y^d}$ are given by
\begin{align*}
  &  \psi_{1,\mathcal{S}}(Z,Y^d)\triangleq Z\prod_{i\in \mathcal{S}}\frac{Y_i-\mu_Y}{\sigma_Y},
  \\& \psi_{0,\mathcal{S}}(Z,Y^d)\triangleq \prod_{i\in \mathcal{S}}\frac{Y_i-\mu_Y}{\sigma_Y},
\end{align*}
Consequently, we use the Fourier expansion to write:
\begin{align*}
    &U_d= f_{0,\phi}+  f_{1,\phi}Z+
\sum_{\mathcal{S}\subseteq [d], \mathcal{{S}\neq \phi}} (f_{0,\mathcal{S}}+f_{1,\mathcal{S}} Z)\left(\prod_{i\in \mathcal{S}}\frac{X_i-\mu_X}{\sigma_X}\right),\\
     &V_d= g_{0,\phi}+g_{1,\phi}Z+
\sum_{\mathcal{S}\subseteq [d],\mathcal{S}\neq\phi} (g_{0,\mathcal{S}}+g_{1,\mathcal{S}} Z)\left(\prod_{i\in \mathcal{S}}\frac{Y_i-\mu_Y}{\sigma_Y}\right),\\
\end{align*}
where $f_{0,\mathcal{S}},f_{1,\mathcal{S}}, \mathcal{S}\subseteq [d]$ and $g_{0,\mathcal{S}},g_{1,\mathcal{S}}, \mathcal{S}\subseteq [d]$ are the Fourier coefficients of the functions $f(Z,X^d)$ and $g(Z,Y^d)$, respectively. It can be noted that by linearity of expectation:
\begin{align}
    \mathbb{E}(U_d)= f_{0,\phi}, \quad \mathbb{E}(V_d)= g_{0,\phi}.
    \label{eq:Prop1:1}
\end{align}
Furthermore,
\begin{align}
\nonumber   & \mathbb{E}(U_dV_d)= f_{0,\phi}g_{0,\phi}+f_{1,\phi}g_{1,\phi}\mathbb{E}(Z^2) +\sum_{\substack{\mathcal{S},\mathcal{S}'\subseteq [d]\\ \mathcal{S},\mathcal{S}'\neq \phi}}\Bigg(
\nonumber     f_{0,\mathcal{S}}g_{0,\mathcal{S}}\prod_{i\in \mathcal{S}}\Big(\frac{\mathbb{E}(X_i)-\mu_X}{\sigma_X}\frac{\mathbb{E}(Y_i)-\mu_Y}{\sigma_Y}\Big)
\\&+ f_{0,\mathcal{S}}g_{1,\mathcal{S}}\prod_{i\in \mathcal{S}}\Big(\frac{\mathbb{E}(X_i)-\mu_X}{\sigma_X}\frac{\mathbb{E}(Y_i)-\mu_Y}{\sigma_Y}\Big)\mathbb{E}(Z) + f_{1,\mathcal{S}}g_{0,\mathcal{S}}\prod_{i\in \mathcal{S}}\Big(\frac{\mathbb{E}(X_i)-\mu_X}{\sigma_X}\frac{\mathbb{E}(Y_i)-\mu_Y}{\sigma_Y}\Big)\mathbb{E}(Z)\nonumber 
\\&+ f_{1,\mathcal{S}}g_{1,\mathcal{S}}\prod_{i\in \mathcal{S}}\Big(\frac{\mathbb{E}(X_i)-\mu_X}{\sigma_X}\frac{\mathbb{E}(Y_i)-\mu_Y}{\sigma_Y}\Big)\mathbb{E}(Z^2)\Bigg) = f_{0,\phi}g_{0,\phi}+ f_{1,\phi}g_{1,\phi}.
\label{eq:Prop1:2}
\end{align}
On the other hand:
\begin{align}
    &\mathbb{E}(U_d)=2P(U_d=1)-1,\quad  \mathbb{E}(V_d)=2P(V_d=1)-1,\label{eq:Prop1:3}
    \\&\qquad \qquad \mathbb{E}(U_dV_d)= 2P(U_d=V_d)-1. \label{eq:Prop1:4}
\end{align}
Let us denote $a=P(U_d=1)$ and $b=P(V_d=1)$. 
Combining Equations \eqref{eq:Prop1:1}-\eqref{eq:Prop1:4}, we get:
\begin{align}
&f_{0,\phi}=2a-1,\quad g_{0,\phi}=2b-1\nonumber\\
&(2a-1)(2b-1)+f_{1,\phi}g_{1,\phi}= 2P(U_d=V_d)-1.\label{eq:Prop1:4.5}
\end{align}
Next, we note that $f_d$ and $g_d$ are Boolean functions which take values in $\{-1,1\}$. As a result, $f_d(Z,X^d), g_d(Z,Y^d)\in [-1,1]$. We get: 
\begin{align*}
&-1 \le {f}_{0,\phi} + {f}_{1,\phi} \le 1, \qquad -1 \le {f}_{0,\phi} - {f}_{1,\phi} \le 1, \\
&-1 \le {g}_{0,\phi} + {g}_{1,\phi} \le 1, \qquad -1 \le {g}_{0,\phi} - {g}_{1,\phi} \le 1.
\end{align*}
Replacing ${f}_{0,\phi}=2a-1$ and ${g}_{0,\phi}=2b-1$, we get:
\begin{align*}
       &-2\min\{a, 1 - a\} \leq {f}_{1,\phi} \leq 2\min\{a,1 - a\}
    \\& -2\min\{b, 1 - b\} \leq {g}_{1,\phi} \leq 2\min\{b,1 - b\} 
\end{align*}
Consequently, from Equation \eqref{eq:Prop1:4.5}, we have:
\begin{align*}
    \zeta_{a,b} - 2\beta_{ab} \leq P(U_d=V_d) \leq \zeta_{a,b} + 2\beta_{ab} ,
\end{align*}
where we have defined $\zeta_{a,b}\triangleq 2ab-a-b+1$ and $\beta_{a,b}\triangleq \min\{a, (1 - a)\}\min\{b, (1 - b)\}$. This concludes the proof of $\mathcal{P}_{CR}(P_XP_Y,\mathcal{U},\mathcal{V})\subseteq \mathcal{P}'$. 
\\\textbf{Proof of $\mathcal{P}'\subseteq \mathcal{P}_{CR}(P_XP_Y,\mathcal{U},\mathcal{V})$}:
\\Let $Q_{U,V}$ be such that 
  \begin{align}
    \zeta_{a,b} - 2\beta_{ab} \leq P(U=V) \leq \zeta_{a,b} + 2\beta_{ab} ,\label{eq:Prop1:6}
\end{align}
where $a=Q_U(1)$ and $b=Q_V(1)$. Let $\tilde{f}_{1,\phi}$ and $\tilde{g}_{1,\phi}$ be real numbers satisfying:
\begin{align}
       &-2\min\{a, 1 - a\} \leq \tilde{f}_{1,\phi} \leq 2\min\{a,1 - a\}
    \\& -2\min\{b, 1 - b\} \leq \tilde{g}_{1,\phi} \leq 2\min\{b,1 - b\} 
    \\& \tilde{f}_{1,\phi}\tilde{g}_{1,\phi}= 2(P(U=V)-\zeta_{a,b}).\label{eq:Prop1:7}
\end{align}
Note such values for $\tilde{f}_{1,\phi}$ and $\tilde{g}_{1,\phi}$ always exist by the intermediate value theorem (IVT). For instance, we may fix $\tilde{f}_{1,\phi}=2\min\{a,1-a\}$ and change the value of $\tilde{g}_{1,\phi}$ within $[-2\min\{b, 1 - b\},2\min\{b, 1 - b\}]$ and apply the IVT along with the condition in Equation \eqref{eq:Prop1:6}.
Let us define 
\begin{align*}
    &\tilde{f}(Z)\triangleq (2a-1)+\tilde{f}_{1,\phi}Z,\qquad 
    \tilde{g}(Z)\triangleq (2b-1)+\tilde{g}_{1,\phi}Z.\\
\end{align*}
Then, using Equation \eqref{eq:Prop1:7}, we conclude $\tilde{f}(z), \tilde{g}(z)\in [-1,1]$ for all values of $z$. The source simulation scheme to simulate $P_{U,V}$ is as follows. Alice observes $Z$ and uses its local randomness to generate a Boolean variable $U$ where $P(U=1) =\frac{1+\tilde{f}(Z)}{2}$ and  $P(U=-1) =\frac{1-\tilde{f}(Z)}{2}$. Note that this is a valid distribution since $\tilde{f}(Z)\in [-1,1]$. Such a Boolean variable can always be produced using local randomness, e.g., using von Neumann's method \cite{VonNeumann1951}. Similarly, Bob locally generates $V$ such that $P(V=1)=\frac{1+\tilde{g}(Z)}{2}$ and $P(V=-1)=\frac{1-\tilde{g}(Z)}{2}$. We claim that $(U,V)$ simulates $Q_{U,V}$. To prove this claim, since any distirbution on pairs of binary variables has three degrees of freedom, it suffices to show that $\mathbb{E}(U)$, $\mathbb{E}(V)$, and $\mathbb{E}(UV)$ for the simulated vairbales are equal to those induced by $Q_{U,V}$. Note that
\begin{align*}
    &P(U=1)= \frac{1+\mathbb{E}(U)}{2}= \frac{1+\frac{1+\mathbb{E}(\tilde{f}(Z))}{2}-\frac{1-\mathbb{E}(\tilde{f}(Z))}{2}}{2}=a\\
    &P(V=1)= \frac{1+\mathbb{E}(U)}{2}=\frac{1+\frac{1+\mathbb{E}(\tilde{g}(Z))}{2}-\frac{1-\mathbb{E}(\tilde{g}(Z))}{2}}{2}=b\\
    \\&P(U=V)= \frac{1+\mathbb{E}(UV)}{2},
\end{align*}
On the other hand:
\begin{align*}
    &\mathbb{E}(UV)= 2P(U=V)-1
    \\&= 2\mathbb{E}(\frac{1+\tilde{f}(Z)}{2}\frac{1+\tilde{g}(Z)}{2})
    +2\mathbb{E}(\frac{1-\tilde{f}(Z)}{2}\frac{1-\tilde{g}(Z)}{2})-1
    \\&= \mathbb{E}(\tilde{f}(Z)\tilde{g}(Z))=(2a-1)(2b-1)+\tilde{f}_{1,\phi}\tilde{g}_{1,\phi}
    \\&= 2\zeta_{a,b}-1+2(P(U=V)-\zeta_{a,b})=2P(U=V)-1. 
\end{align*}
This completes the proof. 
\qed
\section{Proof of Proposition \ref{th:2}}
\label{App:th:2}
Similar to the proof of Proposition \ref{th:1}, it suffices to show that $\mathcal{P}_{EA}(P_{X,Y},\mathcal{U},\mathcal{V},\{-1,1\})\subseteq \mathcal{P}_{CR}(P_{X,Y},\mathcal{U},\mathcal{V})$. To this end, let us consider an EA-NISS setup and 
fix $f_d,g_d$ and ${\Lambda}_{d,i,z}, i\in \{1,2\},z\in \{-1,1\}$. Let $Z_1$ and $Z_2$ be the classical measurement outcomes observed by Alice and Bob, respectively, and $U_d=f_d(Z_1,X^d), V_d=g_d(Z_2,Y^d)$ the simulated output variables. Let us define:
\begin{align*}
    U_{x^d}= f_d(Z_1,x^d),\quad V_{y^d}= f_d(Z_2,y^d),\quad  x^d,y^d\in \{-1,1\}^d.
\end{align*}
Note that the distribution $P_{U_d,V_d}$ has three free variables, and is completely characterized by $\mathbb{E}(U_d), \mathbb{E}(V_d)$ and $\mathbb{E}(U_d,V_d)$. Hence, in order to show that the distribution can be simulated classically, it suffices to show that a pair of variables $U',V'$ can be simulated classically such that:
\begin{align}
&  \mathbb{E}(U')=\sum_{x^d\in \{-1,1\}^d} P_{X^d}(x^d)\mathbb{E}(U_{x^d}),\label{eq:th2:1}\\
&     \mathbb{E}(V')= \sum_{y^d\in \{-1,1\}^d} P_{Y^d}(y^d)\mathbb{E}(U_{y^d}),\label{eq:th2:2}\\
&      \mathbb{E}(U'V')=
\sum_{x^d,y^d\in \{-1,1\}^d}P_{X^d,Y^d}(x^d,y^d)\mathbb{E}(U_{x^d}V_{y^d})\label{eq:th2:3}.
\end{align}
We will show the latter claim by constructing the simulating functions achieving the above equalities. We first need to define several intermediate functions which will be used in constructing the simulating functions as follows. 

Note that for any fixed $x^d,y^d\in \{-1,1\}^d$, $U_{x^d}$ and $V_{y^d}$ are functions of $Z_1$ and $Z_2$, respectively, and potentially local randomness (due to the fact that the processing functions are potentially stochastic functions). From Corollary \ref{cor:1}, we conclude that $P_{U_{x^d},V_{y^d}}$ can be simulated classically by Alice and Bob using one bit of common randomness along with local randomness for any $x^d,y^d\in \{-1,1\}^d$. Let the simulating functions be denoted by $f_{x^d,y^d},g_{x^dy^d}, x^d,y^d\in 
\{-1,1\}^d$, so that if we set $\widetilde{U}_d\triangleq f_{x^d,y^d}(Z,X_{d+1}^{d+d'})$ and $\widetilde{V}_d\triangleq g_{x^d,y^d}(Z,Y_{d+d'+1}^{d+2d'})$ for some fixed $d'\in \mathbb{N}$, then $P_{\widetilde{U}_d,\widetilde{V}_d}$ can be made arbitrarily close to $P_{U_d,V_d}$ by appropriate choice of the simulating functions and blocklength. Note that the inputs $X_{d+1}^{d+d'}$ and $Y_{d+d'+1}^{d+2d'}$ are chosen such that their indices do not overlap, so that they follow a product distribution and Proposition \ref{th:1} can be applied. Next, similar to the proof of Proposition \ref{th:1}, using the Boolean Fourier decomposition, we can write:
\begin{align*}
   & f_{x^d,y^d}(Z,X_{d+1}^{d+d'})\!\!=\!\! f_{x^d,y^d,0,\phi}+Zf_{x^d,y^d,1,\phi}
+\sum_{\mathcal{S}\subseteq [d+1,d+d'],\mathcal{S}\neq \phi} \!\!\!\!\!\!\!\!\!(f_{x^d,y^d,0,\mathcal{S}}\!+\!Zf_{x^d,y^d,1,\mathcal{S}})\prod_{i\in \mathcal{S}} \frac{X_i\!-\!\mu_X}{\sigma_X},
    \\&g_{x^d,y^d}(Z,Y_{d+d'+1}^{d+2d'})\!\!= \!\!g_{x^d,y^d,0,\phi}+Zg_{x^d,y^d,1,\phi}+\sum_{\mathcal{S}\subseteq [d+d'+1,d+2d'],\mathcal{S}\neq \phi} \!\!\!\!\!\!\!\!\!\!\!\!(g_{x^d,y^d,0,\mathcal{S}}\!+\!Zg_{x^d,y^d,1,\mathcal{S}})\!\prod_{i\in \mathcal{S}} \frac{Y_i\!-\!\mu_Y}{\sigma_Y}.
\end{align*}
Let us define the following functions:
\begin{align*}
  &  \tilde{f}_{x^d,y^d}(Z)= f_{x^d,y^d,0,\phi}+f_{x^d,y^d,1,\phi}Z,
   \\& \tilde{g}_{x^d,y^d}(Z)= g_{x^d,y^d,0,\phi}+g_{x^d,y^d,1,\phi}Z,
\end{align*}
where $x^d,y^d\in \{-1,1\}^d$. Note that $\tilde{f}_{x^d}(\cdot),\tilde{g}_{y^d}(\cdot)\in [-1,1]$, to see this, we have the following:
\begin{align*}
    \sum_{x_{d+1}^{d+d'}}\!\!f_{x^d,y^d}(\!Z,x_{d+1}^{d+d'})\!=\! 2^{d'}\!\!(f_{x^d,y^d,0,\phi}\!+\!f_{x^d,y^d,1,\phi}Z)\!\in\! [-2^{d'}\!\!,\!2^{d'}\!],
\end{align*}
where we have used the fact that $f_{x^d,y^d}(Z,x_{d+1}^{d+d'})\in\{-1,1\}$ for all input values. Next, let us define:
\begin{align*}
  &  \overline{f}_{x^d,y^d}(Z)\triangleq f_{x^d,y^d,0,\phi}+|f_{x^d,y^d,1,\phi}|Z,\quad 
   \\& \overline{g}_{x^d,y^d}(Z)\triangleq g_{x^d,y^d,0,\phi}+|g_{x^d,y^d,1,\phi}|Z.
\end{align*}
Since $\tilde{f}_{x^d,y^d}(z),\tilde{g}_{x^d,y^d}(z)\in [-1,1]$ for all values of $z$, we must have 
$\overline{f}_{x^d,y^d}(z),\overline{g}_{x^d,y^d}(z)\in [-1,1]$. 
Note that $\mathbb{E}(U_{x^d})= f_{x^d,y^d,0,\phi}$ for all $y^d\in \{-1,1\}^d$. So, we define $f_{x^d,0,\phi}\triangleq f_{x^d,y^d,0,\phi}$. Similarly, $g_{y^d,0,\phi}\triangleq g_{x^d,y^d,0,\phi}$.
Furthermore, we define:
\begin{align*}
&    f^+_{x^d}(Z)\triangleq f_{x^d,0,\phi}+f^+_{x^d,1,\phi}Z,\quad  g^+_{y^d}(Z)= g_{y^d,0,\phi}+g^+_{y^d,1,\phi}Z
\\&    g^-_{y^d}(Z)= g_{y^d,0,\phi}+g^-_{y^d,1,\phi}Z,
\end{align*}
where 
\begin{align*}
   & f^+_{x^d,1,\phi}\triangleq\max_{y^d\in \{-1,1\}^d}|f_{x^d,y^d,1,\phi}|,
   \\& g^+_{y^d,1,\phi}\triangleq\max_{x^d\in \{-1,1\}^d}|g_{x^d,y^d,1,\phi}|,
   \\& g^-_{y^d,1,\phi}\triangleq-g^+_{y^d,1,\phi}.
\end{align*}
Note that $f^+_{x^d}(\cdot),g^+_{y^d}(\cdot),g^-_{x^d}(\cdot)\in [-1,1]$ since $\overline{f}_{x^d,y^d}, \overline{g}_{x^d,y^d}\in [-1,1]$ for all $x^d,y^d$ and all input values.  Lastly, let us define:
\begin{align*}
    &a\triangleq\sum_{x^d\in \{-1,1\}^d}P_{X^d}(x^d)\mathbb{E}(f^+_{x^d}(Z)),
    \\& b\triangleq\sum_{y^d\in \{-1,1\}^d}P_{Y^d}(y^d)\mathbb{E}(g^+_{y^d}(Z)),
    \\& \rho^+= \sum_{x^d,y^d\in \{-1,1\}^d}P_{X^d,Y^d}(x^d,y^d)\mathbb{E}(f^+_{x^d}(Z)g^+_{y^d}(Z)),
    \\& \rho^-=\sum_{x^d,y^d\in \{-1,1\}^d}P_{X^d,Y^d}(x^d,y^d)\mathbb{E}(f^+_{x^d}(Z)g^-_{y^d}(Z)).
\end{align*}
Note that by construction, for all $x^d,y^d\in \{-1,1\}^d$, we have:
\begin{align*}
 &   \mathbb{E}(f^+_{x^d}(Z)g^+_{y^d}(Z))= f_{x^d,0,\phi}g_{y^d,0,\phi}+ f^+_{x^d,1,\phi}g^+_{y^d,1,\phi}
   \geq  f_{x^d,0,\phi}g_{y^d,0,\phi}+ |f_{x^d,y^d,1,\phi}g_{x^d,y^d,1,\phi}|
\\&\geq f_{x^d,0,\phi}g_{y^d,0,\phi}+ f_{x^d,y^d,1,\phi}g_{x^d,y^d,1,\phi}= \mathbb{E}(U_{x^d}V_{y^d}).
\end{align*}
Similarly, 
\begin{align*}
&    \mathbb{E}(f^+_{x^d}(Z)g^-_{y^d}(Z))\leq  \mathbb{E}(U_{x^d}V_{y^d}).\\
\end{align*}
Consequently, $\rho^-\leq \rho'\leq \rho^+$, where \[\rho'\triangleq \sum_{x^d,y^d\in \{-1,1\}^d}P_{X^d,Y^d}(x^d,y^d)\mathbb{E}(U_{x^d}V_{y^d}).\]
Let $p_{ts}\triangleq \frac{\rho'-\rho^-}{\rho^+-\rho^-}$. Note that $p_{ts}\in [0,1]$ since  $\rho^-\leq \rho'\leq \rho^+$. 

The CR-NISS simulating scheme is as follows. Alice observes $X^d=x^d$, and computes $f^+_{x^d}(Z)$. Similar to the proof of Proposition \ref{th:1}, she uses independent copies of $X_i, i>d$ to generate a binary variable $U'$ such that $P_{U'}(1)= \frac{1+f^+_{x^d}(Z)}{2}$ and
$P_{U'}(-1)= \frac{1-f^+_{x^d}(Z)}{2}$. Alice outputs $U'$ as its simulated variable. Bob first generates a binary random variable $T$ such that $P_T(1)= p_{ts}$ and $P_T(-1)=1-p_{ts}$, using copies of $Y_i, i>d$ which are non-overlapping with those used by Alice.  Bob then observes $Y^d=y^d$, and uses additional non-overlapping copies of $Y_i,i>d$ to generate a binary random variable $V^+$ such that $P_{V^+}(1)= \frac{1+g^+_{y^d}(Z)}{2}$ and
$P_{V^+}(-1)= \frac{1-g^+_{y^d}(Z)}{2}$, and a binary random variable $V^-$ such that  $P_{V^-}(1)= \frac{1+g^-_{y^d}(Z)}{2}$ and
$P_{V^-}(-1)= \frac{1-g^-_{y^d}(Z)}{2}$. If $T=1$, Bob outputs $V'=V^+$ and if $T=-1$, Bob outputs $V'=V^-$. 

It remains to verify that Equations \eqref{eq:th2:1}-\eqref{eq:th2:3} are satisfied. First, note that by construction, we have:
\begin{align*}
    &\mathbb{E}(f^+_{x^d}(Z))= f_{x^d,\phi}=\mathbb{E}(U_{x^d}), \quad x^d\in \{-1,1\}^d
    \Rightarrow \mathbb{E}(U')=\sum_{x^d\in \{-1,1\}^d} P_{X^d}(x^d)\mathbb{E}(U_{x^d}),
    \end{align*}
    So, Equation \eqref{eq:th2:1} is satisfied. Similarly, 
    \begin{align*}
        \mathbb{E}(V')=P_T(1)\mathbb{E}(V^+)+P_T(-1)\mathbb{E}(V^-),
    \end{align*}
    and $\mathbb{E}(V^+)=\mathbb{E}(V^-)=\sum_{y^d\in \{-1,1\}^d}P_{Y^d}(y^d)g_{y^d,\phi}$. So, 
    \begin{align*}
        \mathbb{E}(V')&\!\!= \!p_{ts}\!\!\sum_{y^d}\!\!P_{Y^d}(y^d)g_{y^d,\phi}\!\!+\!\!(1\!\!-\!\!p_{ts})\!\!\sum_{y^d}\!\!P_{Y^d}(y^d)g_{y^d,\phi}
        \\& =\sum_{y^d\in \{-1,1\}^d}P_{Y^d}(y^d)\mathbb{E}(V_{y^d}).
    \end{align*}
    So, Equation \eqref{eq:th2:2} is satisfied. Lastly, 
    \begin{align*}
        \mathbb{E}(U'V')&= P_T(1)\mathbb{E}(U'V^+)+P_T(-1)\mathbb{E}(U'V^-)
        \\&= p_{ts}\rho^++(1-p_{ts})\rho^-= \rho',
    \end{align*}
    where we have used the fact that by defintion $p_{ts}= \frac{\rho'-\rho^-}{\rho^+-\rho^-}$. So, Equation \eqref{eq:th2:3} is satisfied. This concludes the proof. \qed

\end{appendices}

\end{document}